\documentclass[10pt,a4paper]{article}

\usepackage{a4wide}
\usepackage[T1]{fontenc} % codifica dei font in uscita
\usepackage{lmodern}
\usepackage[latin1]{inputenc} % lettere accentate da tastiera
\usepackage[english]{babel} % lingua principale del documento
\usepackage{graphicx} %per inserire le immagini usare il comando \includegraphics[hopzionei=hvalorei,h. . .i]{himmaginei}
\graphicspath{{figures/}}
\DeclareGraphicsExtensions{.pdf}
\usepackage{amsmath,amssymb,amsthm,amsfonts,mathrsfs}
%\usepackage[color]{showkeys}
%\definecolor{refkey}{rgb}{.35,.75,0}
%\definecolor{labelkey}{rgb}{.15,.55,0}
%\usepackage{hyperref}
\usepackage{authblk}  %per gli autori
\usepackage{color}
\usepackage{subfig}
\usepackage{caption}
\usepackage{algorithmic}
\usepackage{algorithm}

\usepackage[subnum]{cases}
\usepackage{todonotes}

\newcommand{\vm}{V_{\max}}
\newcommand{\rhom}{\rho_{\max}}
\newcommand{\vb}{v_*}
\newcommand{\va}{v^*}

\newcommand{\V}{\mathcal{V}}
\newcommand{\Prob}[1]{\mathcal{P}(\vb\to v|#1;\rho)}
\newcommand{\Dv}{\Delta v}

\newcommand{\abs}[1]{\left\vert#1\right\vert}

\newcommand{\norm}[1]{\left\lVert#1\right\rVert}

\DeclareMathOperator{\Acc}{\mathsf{Acc}}
\DeclareMathOperator{\Brak}{\mathsf{Brak}}

\newcounter{the}
\newtheorem{theorem}[the]{Theorem}
\newtheorem{remark}[the]{Remark}

\author{Giuseppe Visconti \\ \vspace{-3mm}
	{\small\it Department of Science and High Technology} \\ \vspace{-1mm}
	{\small\it University of Insubria} \\ \vspace{-1mm}
	{\small\it Via Valleggio 11, 22100 Como, Italy}\\
	\vspace{5mm}
	Michael Herty \\
	{\small\it Department of Mathematics} \\ \vspace{-1mm}
	{\small\it RWTH Aachen University} \\ \vspace{-1mm}
	{\small\it Templergraben 55, 52062 Aachen, Germany} \\ \vspace{5mm}
	Gabriella Puppo \\
	{\small\it Department of Science and High Technology} \\ \vspace{-1mm}
	{\small\it University of Insubria} \\ \vspace{-1mm}
	{\small\it Via Valleggio 11, 22100 Como, Italy}\\ \vspace{5mm}
	Andrea Tosin \\
	{\small\it Department of Mathematical Sciences ``G. L. Lagrange''} \\ \vspace{-1mm}
	{\small\it Polytechnic University of Turin} \\ \vspace{-1mm}
	{\small\it Corso Duca degli Abruzzi 24, 10129 Turin, Italy}
}

\date{}

\begin{document}

\title{Multivalued fundamental diagrams of traffic flow in the kinetic Fokker-Planck limit}
\maketitle

\begin{abstract}
	Starting from interaction rules based on two levels of stochasticity we study the influence of the microscopic dynamics on the macroscopic properties of vehicular flow. In particular, we study the qualitative structure of the resulting flux-density and speed-density diagrams for different choices of the desired speeds. We are able to recover multivalued diagrams as a result of the existence of a one-parameter family of stationary distributions, whose expression is analytically found by means of a Fokker-Planck approximation of the initial Boltzmann-type model. 	
\end{abstract}

\paragraph{Keywords} Traffic modeling, kinetic models, Boltzmann, Fokker-Planck limit, multivalued diagrams 

\paragraph{MSC} 90B20, 65Z05, 35Q20, 35Q70, 35Q84

\section{Introduction} \label{sec:Introduction}

In traffic flow theory a basic tool to analyze vehicular traffic problems is given by the so-called fundamental relations among the macroscopic quantities of the flow. Mainly two of them are used: the flux-density diagram (or fundamental diagram) and the speed-density diagram. They allow one to study the macroscopic trends of vehicle dynamics as a result of the interactions taking place at the microscopic scale. Therefore, macroscopic relations can be used for instance to predict the maximum capacity of a road when certain traffic rules are imposed, such as e.g. inflow block of heavy vehicles or speed limits. Since the macroscopic dynamics depend on different properties of the flow, such as the types of vehicles traveling on the road or the external conditions (e.g. weather or environmental conditions), experimental diagrams show a multivalued structure: several values of the flux of vehicles or of the mean speed may correspond to a given value of the vehicle density, see for instance Figure \ref{fig:Comparisons} in Section \ref{sec:FundamentalDiagrams} with experimental data published in \cite{seibold2013NHM}. The qualitative structure of such diagrams is defined by the properties of different regimes, or phases, of traffic as
widely studied, e.g. in \cite{GreenshieldsSymposium,kerner2004BOOK}.

Mathematical models for traffic flow capable of providing fundamental diagrams which reproduce the structure of experimental data play thus an important role. In the current mathematical literature, there are three different approaches to model traffic flow phenomena, which differ from each other in the scale of representation. {\em Microscopic} models look at the motion of each single vehicle, predicting the evolution of its position and speed by means of a system of ordinary second-order differential equations, \cite{aw2002SIAP,KernerKlenov,ZhangMultiphase}. Clearly, the larger is the number of vehicles, the more computationally expensive is the model. On the opposite end, {\em macroscopic} models provide a large-scale aggregate point of view, neglecting the microscopic dynamics. Therefore, they study only the evolution of the macroscopic quantities related to the traffic flow by means of partial differential equations inspired by fluid dynamics laws, \cite{aw2000SIAP,benzoni2003EJAM,lighthill1955PRSL,richards1956OR}. Although, in contrast to microscopic models, macroscopic equations can be easily treated, it is necessary to complete the equations with a closure law. This means that the fundamental diagram is given as an a priori relation derived from heuristic or physical arguments, which therefore usually do not result from microscopic dynamics, see however \cite{klar1997Enskog}.

In this paper we focus instead on kinetic models ({\em mesoscopic} scale), which are based on a statistical mechanics approach. They provide an aggregate description of traffic flow by linking collective dynamics to microscopic interactions among the vehicles. These models are thus characterized by a statistical function giving the distribution of the velocity of the vehicles on the road. Kinetic models provide quite naturally a closure law based on the modeling of microscopic dynamics, thus without assuming previous knowledge on the dependence of the mean velocity on the local density of
traffic. In fact, macroscopic quantities are computed as moments, with respect to the microscopic speed, of the distribution function. Several kinetic approaches have been proposed, starting from the pioneering work \cite{Prigogine61,PrigogineHerman} and later \cite{paveri1975TR}. Recently, kinetic models have been widely used, e.g. to model multilane traffic flow \cite{klar1999SIAP-1,klar1999SIAP-2}, flows on networks \cite{fermotosin2015}, homogeneous space problems \cite{HertyKlarPareschi}, inhomogeneous space phenomena with non-local interactions \cite{klar1997Enskog}, heterogeneous compositions of the flow \cite{PgSmTaVg}. These models are based on a Boltzmann-type collision term, which models speed changes through a transition probability depending on the local traffic conditions and describing the relaxation in time of the kinetic distribution due to the interactions among vehicles. Although in \cite{PgSmTaVg2} the analytical expression of the stationary solution is found using a suitable choice of the microscopic interactions, in general the interaction integrals appearing in kinetic Boltzmann-type models for traffic flow typically do not provide explicitly the equilibrium distribution and they are very demanding from a computational point of view, see \cite{KlarWegener96}. For this reason, two main approaches have been taken into account
in order to compute the time-asymptotic distribution or to reduce the computational cost: on the one hand, one may consider Vlasov-Fokker-Planck-type models in which the interaction integrals
are replaced by differential operators, see \cite{HertyIllner08,HertyPareschi10,IllnerKlarMaterne}; on the other hand, one may consider simplified kinetic models with a small number of velocities, namely the discrete-velocity models, see e.g. \cite{FermoTosin14,Herty2007481}.

In this paper, starting from a Boltzmann-type model characterized by microscopic interaction rules accounting for two levels of stochasticity in drivers' subjective decisions, we recover a Fokker-Planck-type kinetic model as a {\em grazing collision limit}, see e.g. \cite{Desvillettes,DiPernaLions,PareschiToscani2006,PareschiToscaniVillani,Villani1999}. In particular we follow the technique introduced in \cite{HertyPareschi10} for traffic flow, or in \cite{Toscani2006} in the context of opinion formation models and in \cite{PareschiToscani2006} for gas-dynamics. On the one hand, the goal is to find the analytical expression of the steady state distributions starting from the microscopic dynamics. On the other hand, we are interested in reproducing multivalued diagrams without using the multi-population framework \cite{PgSmTaVg}, in which the scattering behavior of experimental data is explained as a result of the heterogeneous composition of the flow of vehicles on the road. This approach requires the solution of a system of kinetic equations describing the evolution of each kinetic distribution associated to each class of vehicles. Here, instead, assuming that drivers react only to average quantities, multivalued diagrams are obtained as a result of the existence of a one-parameter family of stationary solutions. We observe by numerical evidence that this parameter is strictly linked to the macroscopic characteristics of the flow, as the critical density, the maximum flux, the maximum speed in free-flow regime and the maximum density. This means that the parameter synthesizes the different properties of the flow which can induce different macroscopic dynamics. We also improve the multivalued diagrams obtained in \cite{IllnerKlarMaterne}, in which the authors introduce a Fokker-Planck-type multilane model and recover scattered data as a consequence of the introduction of a probability of lane changing. As a matter of fact, the diagrams show only a small range of density in which the flux of vehicles is multivalued. In addition to that, they reproduce a capacity drop which is too sharp.
%Moreover, in contrast to \cite{PgSmTaVg2} in which the formulation of the transition probability produces quantized equilibrium solutions on a reduced space of velocity, defining thus a link between continuous-velocity and discrete-velocity models, the steady state distributions of the present model, instead, provides a richer and more realistic information on the distribution of the vehicles' velocity on the road.\todo{Gabriella trova masochistica questa frase perch\'e citiamo il nostro lavoro in termini critici.}

Several Fokker-Planck-type models for traffic flow have been proposed in the literature. For instance in \cite{HertyPareschi10} the authors derive a Fokker-Planck-type model as an asymptotic limit of a Boltzmann-type model featuring microscopic interactions in which the drivers react to the mean speed. However, in \cite{HertyPareschi10} the authors are mainly interested in establishing links between Fokker-Planck-type models based on a heuristic modeling of braking/acceleration and diffusive operators, see \cite{IllnerKlarMaterne}, and Fokker-Planck type models obtained as limit of Boltzmann-type equations, thus based on microscopic motivations for the given form of the operators. In this sense, the present work can be considered as a natural sequel to \cite{HertyPareschi10}. In fact, on the one hand, here we rewrite the microscopic rules of \cite{HertyPareschi10} in order to take into account the influence of the road congestion on the drivers' decisions. Furthermore, we simplify the grazing collision limit, which now does not depend on free parameters. On the other hand, we focus mainly on the study of macroscopic trends of traffic by means of the fundamental diagrams.

The paper is organized as follows. In Section \ref{sec:Boltzmann} we introduce the microscopic interaction rules and we discuss the similarities with other models. In particular, we show that the rules can be obtained by averaging the post-interaction speeds prescribed in \cite{PgSmTaVg2} with the addition of a stochastic perturbation due to the inability of the drivers to adjust precisely their speed. In contrast to \cite{HertyPareschi10}, another stochastic term is introduced, which models the influence of the road congestion on the drivers' behavior. In Section \ref{sec:FokkerPlanck} we derive the Fokker-Planck model as the grazing collision limit of the Boltzmann-type equation. In Section \ref{sec:FundamentalDiagrams} we recover the time-asymptotic kinetic distributions in general form. Then, we specify the interaction rules by choosing the desired speeds of the drivers in braking and acceleration scenarios and we study the fundamental diagrams resulting from two cases. We show that a particular choice of the desired speeds allows one to reproduce the closures law for macroscopic models, as for instance the Greenshields' closure \cite{greenshields}. Conversely the diagrams obtained from the second set of desired speeds reproduce very well the qualitative structure of experimental data, in particular their scattering in the congested flow regime. Finally, we end in Section \ref{sec:Conclusions} summarizing the main results of this work, and proposing possible applications and further developments.

\section{Boltzmann model} \label{sec:Boltzmann}

Recalling that the core of a kinetic model for traffic flow is the modeling of the microscopic interactions among the vehicles, we first discuss the choice of a set of interaction rules and then we recover a Boltzmann-type equation. As usual, the model is characterized by a collision term which describes the relaxation of the kinetic distribution towards equilibrium.

Here, we assume that drivers react to the mean speed of the surrounding vehicles. Thus, they adjust their velocity by comparing their actual speed with the macroscopic velocity $u$. We recall that, in kinetic theory, the macroscopic variables are recovered as moments of the kinetic distribution function $f=f(t,v):  \mathbb{R}^+ \times \V \to \mathbb{R}^+$. Here we assume that $\V=[0,\vm]$ is the space of the microscopic velocities, $\vm>0$ being the maximum speed, which usually depends on the mechanical characteristics of the vehicles, on imposed speed limits or on the type of drivers. %, according to the maximum velocity they intend to keep in free road conditions.
Then
\[
	\rho(t)=\int_{0}^{\vm} f(t,v) \mathrm{d}v, \quad (\rho u)(t)=\int_{0}^{\vm} v f(t,v) \mathrm{d}v.
\]

\begin{remark}
	Throughout the paper we consider the \textquotedblleft spatially homogeneous case\textquotedblright, i.e. we assume that vehicles are uniformly distributed along the road so that the kinetic distribution function $f$ depends only on their microscopic speed $v\in\V$ but not on their position $x$. This assumption allows for a direct focus on the interaction dynamics among the vehicles, neglecting possible space inhomogeneities.
\end{remark}

\subsection{Microscopic model} \label{sec:MicroscopicModel} Let $\vb$ be the pre-interaction velocity of a vehicle and let $u$ be the mean speed of the flow. We distinguish two cases:

\begin{description}
	\item[Acceleration] Vehicles tend to increase their speed when they are slower than the speed of the flow, namely if $\vb < u$;
	\item[Braking] Vehicles tend to decrease their speed when they are faster than the speed of the flow, namely if $\vb > u$.
\end{description}

\begin{remark}
	Notice that we do not model the scenario in which $\vb = u$, when it is realistic to assume that vehicles do not change their speed. We will see that this case does not give contribution in the Fokker-Planck approximation, see Section \ref{sec:FokkerPlanck}.
\end{remark}

We point out that the interaction rules in traffic flow are different from the gas dynamics case. In particular, the post-interaction speeds resulting from the two scenarios described above are not necessarily symmetric. Therefore, from a microscopic point of view, we can prescribe the following form of the output velocity $v$ resulting from acceleration and braking:

\begin{subequations}
	\label{eq:GenericMicroRules}
	\begin{align}
		v = \vb + \epsilon \, P(\rho) \, \Dv_A(\vb,\rho) + \sqrt{\epsilon \, Q_A(\rho)} \, \Dv_A(\vb,\rho) \, \xi, \quad &\text{if \; $\vb < u$} \label{eq:GenericMicroRulesAcc}\\
		v = \vb - \epsilon \, (1-P(\rho)) \, \Dv_B(\vb,u,\rho) + \sqrt{\epsilon \, Q_B(\rho)} \, \Dv_B(\vb,u,\rho) \, \xi, \quad &\text{if \; $\vb > u$} \label{eq:GenericMicroRulesBrak}
	\end{align}
\end{subequations}
where
\begin{itemize}
	\item $\Dv_A(\vb,\rho)$ is the jump of velocity in the acceleration interaction. We assume that
	\[
		\Dv_A(\vb,\rho):=V_A(\vb,\rho)-\vb.
	\]
	The quantity $V_A(\vb,\rho)\in(\vb,\vm]$ is the desired speed of a vehicle that increases its velocity $\vb$. In a realistic framework, $V_A(\vb,\rho)$ cannot be taken as a fixed parameter. Therefore, in contrast to \cite{HertyPareschi10}, we take $V_A(\vb,\rho)$ dependent on the actual speed $\vb$, in order to preserve the bounds of the acceleration, see \cite{Lebacque03}. Potentially, one can consider the case in which the desired speed $V_A(\vb,\rho)$ depends also on the density of vehicles $\rho$, see Section \ref{sec:Case1}.
	%	The fact that $\Dv_A$ does not depend on the mean speed $u$ means that $\Dv_A$ is constant with respect to the distance between the actual speed $\vb$ and the mean speed.
	%
	% Is this sufficient to ensure that the acceleration of the model is bounded? Remember that in \cite{HertyPareschi10} $\Dv_A=u-\vb$: may this produce unbounded accelerations?
	%
	\item $\Dv_B$, instead, is the jump of velocity in the braking interaction. As discussed for $\Dv_A(\vb,\rho)$, we consider
	\[
		\Dv_B(\vb,u,\rho):=\vb-V_B(u,\rho).
	\]
	The quantity $V_B(u,\rho)\in[0,\vb)$ is the desired speed of a vehicle which is decreasing its velocity $\vb$. Notice that, in contrast to the desired speed in acceleration, $V_B(u,\rho)$ is not a function of $\vb$ but it depends on the macroscopic flux, in particular on the local density $\rho$ and on the mean speed $u$. In fact, in the braking scenario, the desired speed is mainly influenced by the conditions of the traffic flow. For instance, let us consider the case in which a driver bumps into a jam. Initially the driver travels with a velocity larger than the the speed of the flow in front of him, i.e. $\vb \gg u$. Then, the driver must brake and adjust his speed to the traffic condition ahead independently of the value of his actual velocity $\vb$. Therefore we expect that the jump of velocity $\Dv_B(\vb,u,\rho)$ is greater when the difference between the actual speed and the mean speed is larger, as for instance in \cite{HertyPareschi10};
	%Remember that in \cite{HertyPareschi10} $\Dv_B=\vb-u$.}
	%
	\item $\xi$ is a random variable with given distribution $\eta(\xi)$, having zero mean and variance $\sigma^2$. The presence of $\xi$ allows one to include a noise term proportional to the velocity jump. From the modeling point of view, the noise term is introduced in order to consider the fact that drivers are not able to estimate and to reach precisely the desired speed after acceleration and braking, as prescribed in \eqref{eq:GenericMicroRules};
	% Instead, from a mathematical point of view, using the time scaling considered in Section \ref{sec:FokkerPlanck} for the grazing collision limit, the diffusive part of the Fokker-Planck equation will result only if the noise term is modeled;
	%
	\item $\epsilon \in (0,1)$ is a deterministic and dimensionless parameter which models the time scaling. We notice that for $\epsilon \to 0$ we have $v \to \vb$ for both $\vb < u$ and $\vb > u$. This means that $\epsilon$ controls also the strength of the microscopic interactions;
	\item $P(\rho)$ is the {\em probability of accelerating}, see e.g. \cite{KlarWegener96,Prigogine61}, taken as a function of the local density $\rho$. It allows one to define the post-interaction speeds as functions of the density. Thus, the post-interaction speed $v$ defined by the rules \eqref{eq:GenericMicroRules} is not deterministically fixed. Although several laws of the probability function $P(\rho)$ are taken into account in order to better fix experimental data, see e.g. \cite{Prigogine61,PrigogineHerman,PgSmTaVg3}, we will consider the simplest choice
	\begin{equation}\label{eq:Plaw}
		P(\rho)=1-\frac{\rho}{\rhom},
	\end{equation}
	where $\rhom$ is the maximum density taken as the maximum number of vehicles per unit length in bumper-to-bumper conditions;
	\item $Q_A(\rho)$ and $Q_B(\rho)$ are functions of the density $\rho$ which modulate the variation of the velocity due to the noise term. In particular, we assume that the uncertainty in accelerating and braking vanishes in free road and jammed conditions, namely when vehicles tend to travel at the maximum speed or to stop, respectively. From now on, we will always take
	\begin{equation} \label{eq:QA-QB}
		Q_A(\rho)=P, \quad Q_B(\rho)=1-P.
	\end{equation}
	These simple choices allow one to simplify the microscopic model \eqref{eq:GenericMicroRules} and at the same time they reproduce realistic macroscopic behaviors of traffic flow, see Section \ref{sec:FundamentalDiagrams}.\\
	Let us illustrate the model considering a few typical cases. Let $\rho \approx \rhom$, namely let us suppose that the road is almost congested. Then $P(\rho) \approx 0$ and the second term in \eqref{eq:GenericMicroRulesAcc} vanishes. Since $Q_A(\rho) \approx 0$ when $\rho \approx \rhom$, it results $v \to \vb$ if $\vb < u$. Thus vehicles tend to keep their velocities even if they are slower than the speed of traffic as the road becomes congested. In contrast, if $\vb > u$, the post-interaction speed $v$ is less than the pre-interaction speed $\vb$, see \eqref{eq:GenericMicroRulesBrak}, thus fast vehicles reduce the velocity in order to reach the desired speed $V_B(u,\rho)$. On the contrary, if $\rho \approx 0$ then $P(\rho) \approx 1$, i.e. the probability of accelerating becomes larger as the road becomes free. Therefore, using the rules \eqref{eq:GenericMicroRules}, the post-interaction speed $v$ is greater than $\vb$ if $\vb < u$, namely slow vehicles increase their velocity in order to reach the desired speed $V_A(\vb,\rho)$. Conversely, $v \to \vb$ in the opposite case $\vb > u$, i.e. fast vehicles tend to keep their high speed.
\end{itemize}
Using definitions \eqref{eq:QA-QB}, the microscopic interaction rules \eqref{eq:GenericMicroRules} can be written as
\begin{subequations}
	\label{eq:MicroRules}
	\begin{align}
	v = \vb + \sqrt{\epsilon \, P(\rho)} \, \Dv_A(\vb,\rho) \left( \sqrt{\epsilon \, P(\rho)} + \xi \right), \quad &\text{if \; $\vb < u$} \label{eq:MicroRulesAcc}\\
	v = \vb - \sqrt{ \epsilon \, (1-P(\rho))} \, \Dv_B(\vb,u,\rho) \left( \sqrt{\epsilon \, (1-P(\rho))} + \xi \right), \quad &\text{if \; $\vb > u$} \label{eq:MicroRulesBrak}
	\end{align}
\end{subequations}
It is clear that the above expressions are defined by including two levels of stochasticity.
%The first one is modeled by the second terms at the right-hand side of equations \eqref{eq:MicroRules}, which make the jumps of velocity, both in acceleration and braking, function of the local density $\rho$ by means of the probability $P(\rho)$. Thus, they model the stochastic behavior of drivers in the sense that the output speed after an interaction is not deterministically fixed but it is prescribed by taking into account the level of congestion of the road. The second one is instead modeled by the third terms in \eqref{eq:MicroRules} which allow one to consider the inability of a driver to adjust precisely his speed, thereby modeling a stochastic perturbation due to uncertainty/unpredictability of the driver behavior.
The first one is modeled by the terms at the right-hand side of equations \eqref{eq:MicroRules} that do not involve the random variable $\xi$. These terms concern the non-deterministic outcome of the vehicle interactions and which give rise to an average post-interaction speed depending on the probability of accelerating. The second level of stochasticity is instead modeled by the terms in \eqref{eq:MicroRules} that involve the random variable $\xi$. These terms perturb the average post-interaction speed as a result of the unpredictability of the individual behavior of the drivers.

\begin{remark}[Bounded speeds]
	We expect that the post-interaction speed $v$ is such that $v\in[\vb,V_A]$ when accelerating (i.e. $\vb < u$) and $v\in[V_B,\vb]$ when braking (i.e. $\vb > u$). We can show that this is indeed guaranteed by making suitable assumptions on the random variable $\xi$. For the equation \eqref{eq:MicroRulesAcc} we can write
	\[
	v = \vb (1 - \epsilon P - \sqrt{\epsilon P} \xi) + V_A (\epsilon P + \sqrt{\epsilon P} \xi).
	\]
	Thus $v$ is a convex combination of the pre-interaction velocity $\vb$ and of the desired speed $V_A$ if and only if
	\[
	-\frac{\epsilon P}{\sqrt{\epsilon P}} \leq \xi \leq \frac{1 - \epsilon P}{\sqrt{\epsilon P}}.
	\]
	Instead, in the case \eqref{eq:MicroRulesBrak} we can write
	\[
	v \leq \vb (1 - \epsilon (1-P) + \sqrt{\epsilon (1-P)} \xi) + V_B (\epsilon (1-P) - \sqrt{\epsilon (1-P)} \xi).
	\]
	Thus $v$ results again in a convex combination of the pre-interaction velocity $\vb$ and of the desired speed $V_B$ if and only if
	\[
	\frac{\epsilon (1-P) - 1}{\sqrt{\epsilon (1-P)}} \leq \xi \leq \frac{\epsilon (1-P)}{\sqrt{\epsilon (1-P)}}. 
	\]
	Finally, the two bounds are simultaneously preserved if
	\[
	\max\left\{-\frac{\epsilon P}{\sqrt{\epsilon P}},\frac{\epsilon (1-P) - 1}{\sqrt{\epsilon (1-P)}}\right\} \leq \xi \leq \min\left\{\frac{1 - \epsilon P}{\sqrt{\epsilon P}},\frac{\epsilon (1-P)}{\sqrt{\epsilon (1-P)}}\right\}.
	\]
	Since in the sequel we will be interested in the behavior of the model for small values of $\epsilon$, then in the limit we can allow $\abs{\xi} \in [0,+\infty)$. However, observe that the post-interaction speeds are again bounded because $v \to \vb$ for $\epsilon \to 0$.\\
	The above bounds do not preserve in principle that $\xi$ has zero mean. Nevertheless, bounds satisfying this request can be easily obtained. For instance, in the simple case $\epsilon=1$, straightforward computations lead to the following bounds
	\begin{gather*}
	\abs{\xi} \leq \frac{1-P}{\sqrt{P}}, \quad \text{if $P \geq 1/2$} \quad \text{or} \quad
	\abs{\xi} \leq \frac{P}{\sqrt{1-P}}, \quad \text{if $P < 1/2$}
	\end{gather*}
	which ensure that $\xi$ has zero mean.
\end{remark}

\subsection{Comparison with other models}\label{sec:Comparison} The interaction rules prescribed in \eqref{eq:MicroRules} can be obtained as a generalization of the stochastic interaction rules given in \cite{PgSmTaVg2}. There the post-interaction speed is given by
\begin{numcases}{v=}
	V_A \text{ with probability  $P$}, \, \vb \text{ with probability  $1-P$}, \quad &\text{if \; $\vb < \va$} \label{eq:PSTV-MicroRulesAcc}\\
   	\vb \text{ with probability  $P$}, \, V_B \text{ with probability  $1-P$}, \quad &\text{if \; $\vb > \va$} \label{eq:PSTV-MicroRulesBrak}
\end{numcases}
where $\va$ is the microscopic velocity of the leading vehicle. In fact, in \cite{PgSmTaVg2}, we consider binary interactions in which a vehicle first compares its velocity $\vb$ with the velocity $\va$ of the vehicle ahead, and then decides in probability how to change its speed. Since in the present framework we are interested in mean field interactions, $\va$ is replaced by the macroscopic speed $u$ and, apart from the additional noise term $\xi$ and the scaling factor $\epsilon$, the rules \eqref{eq:MicroRules} can be indeed recovered as mean speed of the rules \eqref{eq:PSTV-MicroRulesAcc}-\eqref{eq:PSTV-MicroRulesBrak}.

Notice that \eqref{eq:PSTV-MicroRulesAcc}-\eqref{eq:PSTV-MicroRulesBrak} correspond to the so-called $\delta$ model in \cite{PgSmTaVg2}, which describes the situation of instantaneous change of velocity produced by jumps in velocity, from $\vb$ to $V_A$ (acceleration) or to $V_B$ (braking). Clearly, more refined models are present in the literature, see e.g. \cite{KlarWegener96}. Also in \cite{PgSmTaVg2} a less simplistic model is considered, in which the output speed is uniformly distributed over a bounded range of velocities, the so-called $\chi$ model. However, looking at the fundamental diagrams of traffic, it is proved that the essential information at equilibrium is already caught by the $\delta$ model.

This fact can be also mathematically investigated by computing the evolution equation for the macroscopic speed. In fact, assuming mean field interactions (in the sense specified above), the model introduced in \cite{PgSmTaVg2} specializes as
\begin{equation}\label{eq:PSTV-Model}
	\partial_t f(t,v) = \rho \int_0^{\vm} \Prob{u} f(t,\vb) \mathrm{d}\vb - \rho f(t,v)
\end{equation}
where $\Prob{u}$ is the transition probability modeling the stochastic microscopic interaction rules. It writes as
\[
	\Prob{u}=
			\begin{cases}
				P \Acc (v) + (1-P) \delta_{\vb}(v), & \text{if \; $\vb < u$}\\
				P \delta_{\vb}(v) + (1-P) \Brak (v), & \text{if \; $\vb > u$}
			\end{cases}
\]
where $\Acc$ and $\Brak$ are probability densities prescribing the speed after acceleration and braking, respectively. In the $\delta$ model (see \eqref{eq:PSTV-MicroRulesAcc}-\eqref{eq:PSTV-MicroRulesBrak}) it results
\begin{equation}\label{eq:DeltaModel}
	\Acc(v)=\delta_{V_A}(v), \quad \Brak(v)=\delta_{V_B}(v)
\end{equation}
while for the $\chi$ model one has
\begin{equation}\label{eq:ChiModel}
	\Acc(v)=\frac{\chi_{[\vb,V_A]}(v)}{V_A-\vb}, \quad \Brak(v)=\frac{\chi_{[V_B,\vb]}(v)}{\vb-V_B}.
\end{equation}
Substituting the explicit expression of $\Prob{u}$ in \eqref{eq:PSTV-Model}, multiplying for $v$ and integrating over the velocity space $\V$, we find the evolution equation for the macroscopic speed. In particular, for the choice \eqref{eq:DeltaModel} we get
\[
	\frac{\mathrm{d}}{\mathrm{d}t} u(t) = P \int_{0}^{u} (V_A-\vb) f(t,\vb) \mathrm{d}\vb + (1-P) \int_{0}^{u} (\vb-V_B) f(t,\vb) \mathrm{d}\vb
\]
while for the choice \eqref{eq:ChiModel} we obtain
\[
	\frac{\mathrm{d}}{\mathrm{d}t} u(t) = \frac12 \left\{ P \int_{0}^{u} (V_A-\vb) f(t,\vb) \mathrm{d}\vb + (1-P) \int_{0}^{u} (\vb-V_B) f(t,\vb) \mathrm{d}\vb \right\}.
\]
We notice that the two differential equations for $u$ differ only in a multiplicative constant. This means that the steady state of the macroscopic speed does not depend on the specific choice, either \eqref{eq:DeltaModel} or \eqref{eq:ChiModel}, of the probability densities $\Acc$ and $\Brak$. In fact only the relaxation rate towards the equilibrium is influenced by the presence of the multiplicative factor.

The previous considerations justify the choice of the simpler $\delta$ model as a starting point for the derivation of the interaction rules \eqref{eq:MicroRules}. However, notice that in the present paper we include also the noise term, accounting for output speeds which are not precisely determined.
%From a mathematical point of view a noise term is needed in order to obtain the diffusive part in the Fokker-Planck equation, since without noise we can recover only a Vlasov-type equation in the grazing collision limit, see e.g. \cite{HertyIllner08}. This technical motivation can be also justified from a modeling point of view. In fact, the grazing limit can be viewed as an inexact perception of the speed of the leading vehicle.

The microscopic interaction rules given in \eqref{eq:MicroRules} are very similar to those chosen in \cite{HertyPareschi10}. There the post-interaction speed are given by
\begin{subequations}
	\label{eq:HP-MicroRules}
	\begin{align}
	v = \vb + A \, (V_A-\vb) + \nu(\vb) \, (V_A-\vb)^\kappa \, \xi, \quad &\text{if \; $\vb < u$} \label{eq:HP-MicroRulesAcc}\\
	v = \vb - B \, (\vb-V_B) + \nu(\vb) \, (\vb-V_B)^\kappa \, \xi, \quad &\text{if \; $\vb > u$} \label{eq:HP-MicroRulesBrak}.
	\end{align}
\end{subequations}
where $\kappa \geq 1$ is an exponent that calibrates the dependence of the noise on the braking/acceleration dynamics and $0 \leq \nu(\vb) \leq 1$ is a function that vanishes at the extreme values of the velocity.

However, some important differences are introduced in the present work. In fact, in \eqref{eq:HP-MicroRules} the jumps of velocity modeled by the second terms of the right-hand side are fixed since they do not depend on the level of congestion of the traffic. Moreover, these jumps are modulated by two non-negative constants $A,B \in [0,1]$ controlling the strength of the interactions, which, however, do not play the role of the time-scaling parameters. Therefore, when a time scaling is applied, the authors need to make assumptions on the way in which the ratio of these parameters behaves in the grazing collision limit. In this way, they produce some arbitrary constants which are not actually model parameters and they use them in order to recover some Fokker-Planck model already studied in the literature.

\subsection{Derivation of the Boltzmann equation with the microscopic interaction rules \eqref{eq:MicroRules}} \label{sec:DerivationBoltzmann} As mentioned in the Introduction, in this paper we recover the Fokker-Planck equation as a limit of a Boltzmann model for traffic flow with the aim of making the study of stationary solutions and of fundamental diagrams amenable to analysis. To this end, it is useful to derive preliminarily a weak form of the Boltzmann equation.

The derivation of a general Boltzmann-type equation for the case of binary interactions whose post-interaction states are influenced by random variables is given e.g in \cite[Chap. 1, Sect. 1.4]{PareschiToscaniBOOK} and in \cite{HertyKlarPareschi} for traffic flow models. However, since in the present framework we deal with mean field interactions, we follow a different approach based on a formulation similar to \eqref{eq:PSTV-Model}.

For the post-interaction rules \eqref{eq:MicroRules} the transition probability is simply
\[
	\mathcal{P}(\vb\to w|u;\rho) = \delta_v(w),
\]
where $v$ is given case-wise by either \eqref{eq:MicroRulesAcc} or \eqref{eq:MicroRulesBrak} and includes also the random variable $\xi$.

The evolution equation for the kinetic distribution $f$ becomes then
\begin{equation}\label{eq:NoWFModel}
	\partial_t f(t,w) = \int_{-\infty}^{+\infty} \int_0^{\vm} \mathcal{P}(\vb\to w|u;\rho) f(t,\vb) \mathrm{d}\vb \eta(\xi) \mathrm{d}\xi - f(t,w).
\end{equation}
To pass to the weak form we multiply equation \eqref{eq:NoWFModel} by a test function $\phi\in C(\V)$ and integrate over $\V$:
\begin{equation}\label{eq:BoltzmannModel}
	\frac{\mathrm{d}}{\mathrm{d}t} \int_0^{\vm} \phi(w) f(t,w) \mathrm{d}w = \left\langle \int_0^{\vm} (\phi(v)-\phi(\vb)) f(t,\vb) \mathrm{d}\vb \right\rangle
\end{equation}
where the operator $\langle \cdot \rangle$ denotes the mean with respect to the distribution $\eta$, namely
\[
	\left\langle g \right\rangle = \int_{-\infty}^{+\infty} g(\xi) \eta(\xi) \mathrm{d}\xi.
\]

From \eqref{eq:BoltzmannModel} it is clear that $\phi(w)=1$ is a collision invariant which guarantees the conservation of the total number of vehicles. Moreover, it represents the only conservation property satisfied by the kinetic equation, as usual in a traffic flow model.

%\begin{remark}[Macroscopic speed] \label{rm:MacroSpeed}
%	If we consider $\phi(v)=v$ in \eqref{eq:BoltzmannModel} and we divide by the density $\rho$ we can recover the evolution equation for the macroscopic speed
%	\[
%		\frac{\mathrm{d}}{\mathrm{d}t} u(t) = \frac{1}{\rho} \langle \int_0^{\vm} (v-\vb) f(t,\vb) \mathrm{d}\vb \rangle.
%	\]
%	Using the microscopic rules \eqref{eq:MicroRules} given at the beginning of this section, we get
%	\begin{equation}\label{eq:EqMacroSpeed}
%		\frac{\mathrm{d}}{\mathrm{d}t} u(t) = \frac{1}{\rho} \int_0^u \epsilon P \Dv_A f(t,\vb) \mathrm{d}\vb - \frac{1}{\rho} \int_u^{\vm} \epsilon (1-P) \Dv_B f(t,\vb) \mathrm{d}\vb.
%	\end{equation}
%	Since $\Dv_A=V_A-\vb$ and $\Dv_B=\vb-V_B$, where $V_B$ is not a function of the pre-interaction speed $\vb$, the equation writes as
%	\[
%		\frac{\mathrm{d}}{\mathrm{d}t} u(t) = \epsilon \left( - P u - (1-2P) u^+ + \frac{1-P}{\rho} \rho^+ V_B + \frac{P}{\rho} \int_{0}^u V_A f(t,\vb) \mathrm{d}\vb \right),
%	\]
%	where $\rho^+$ and $u^+$ are the partial moments defined as
%	\[
%		\rho^+ = \int_u^{\vm} f(t,v) \mathrm{d}v, \quad u^+ = \frac{1}{\rho} \int_u^{\vm} v f(t,v) \mathrm{d}v.
%	\]
%	We can thus state that the macroscopic speed depends strongly only on the choice of the desired speeds $V_A$ and $V_B$. This fact will be observed also in Section \ref{sec:FundamentalDiagrams} when we will show that several diagrams of traffic can be obtained changing the modeling of the desired speeds.	
%\end{remark}

%\begin{remark}[Synchronized flow] \label{rm:Synchro}
\paragraph{Synchronized flow}
	If we consider $\phi(w)=w$ in \eqref{eq:BoltzmannModel} and we divide by the density $\rho$ we can recover the evolution equation for the macroscopic speed:
	\[
		\frac{\mathrm{d}}{\mathrm{d}t} u(t) = \frac{1}{\rho} \left\langle \int_0^{\vm} (v
		-\vb) f(t,\vb) \mathrm{d}\vb \right\rangle.
	\]
	Using the microscopic rules \eqref{eq:MicroRules} to express the post-interaction speed $v$, we get
	\begin{equation}\label{eq:EqMacroSpeed}
		\frac{\mathrm{d}}{\mathrm{d}t} u(t) = \frac{1}{\rho} \int_0^u \epsilon P \Dv_A f(t,\vb) \mathrm{d}\vb - \frac{1}{\rho} \int_u^{\vm} \epsilon (1-P) \Dv_B f(t,\vb) \mathrm{d}\vb,
	\end{equation}
	where we have used the fact that the random variable $\xi$ has zero mean. Suppose that the desired speeds $V_A$ and $V_B$ do not depend on the pre-interaction velocity $\vb$ and take, as assumed for instance in \cite{HertyPareschi10}, $V_A=V_B=u$. In this case $\Dv_A = -\Dv_B$ and using equation \eqref{eq:EqMacroSpeed} a straightforward computation provides
	\begin{equation} \label{eq:EvolutionSynchro}
		\frac{\mathrm{d}}{\mathrm{d}t} u(t) = -\frac{\epsilon}{2\rho} (1-2P) \int_0^{\vm} \abs{u-\vb} f(t,\vb) \mathrm{d}\vb.
	\end{equation}
	Observe that $\frac{\mathrm{d}}{\mathrm{d}t} u(t) = 0$ if and only if $f(v) = \rho \delta_u(v)$ which gives thus a steady solution corresponding to the phenomenon of synchronized traffic flow, in which all vehicles travel at the same speed $u$.\\
	Moreover, notice that $\frac{\mathrm{d}}{\mathrm{d}t} u(t) > 0$ if $P > 1/2$, which means that $u$ is increasing, i.e. $u \to \vm$, until $f(v) = \rho \delta_u(v)$. While, if $P < 1/2$, $\frac{\mathrm{d}}{\mathrm{d}t} u(t) < 0$, which means that $u$ is decreasing, i.e. $u \to 0$, until $f(v) = \rho \delta_u(v)$. Therefore the sign of $\frac{\mathrm{d}}{\mathrm{d}t} u(t)$ is defined by the density $\rho$ at initial time while the equilibrium value of the speed depends also on the initial condition $u(0)$. We stress that this analysis provides information only on the structure of the steady state of the kinetic distribution $f$ but the macroscopic equation \eqref{eq:EvolutionSynchro} for the average speed $u$ is not sufficient by itself to compute the equilibrium value of $u$.
	
	\begin{figure}
		\centering
		\includegraphics[width=0.49\textwidth]{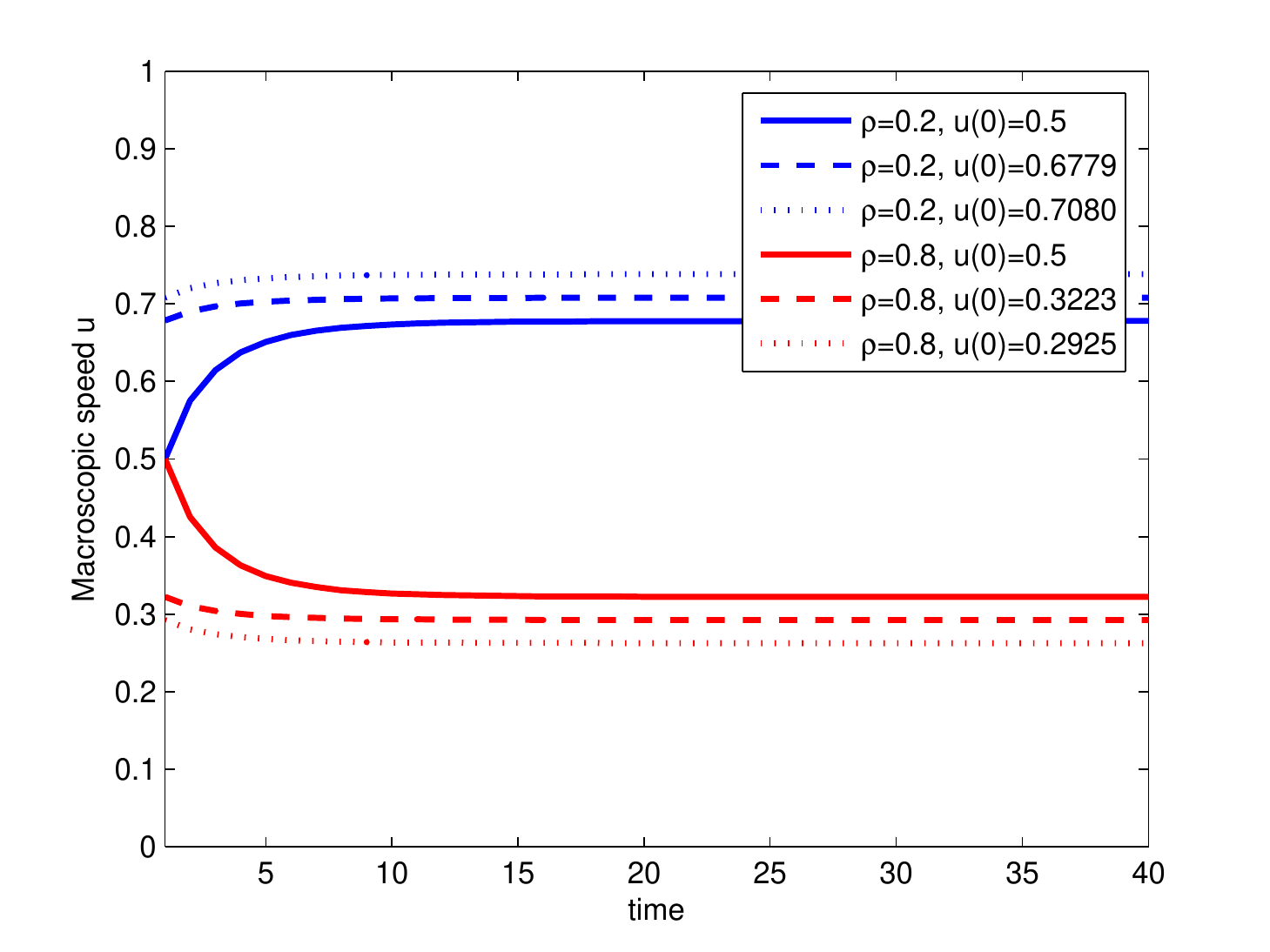}
		\includegraphics[width=0.49\textwidth]{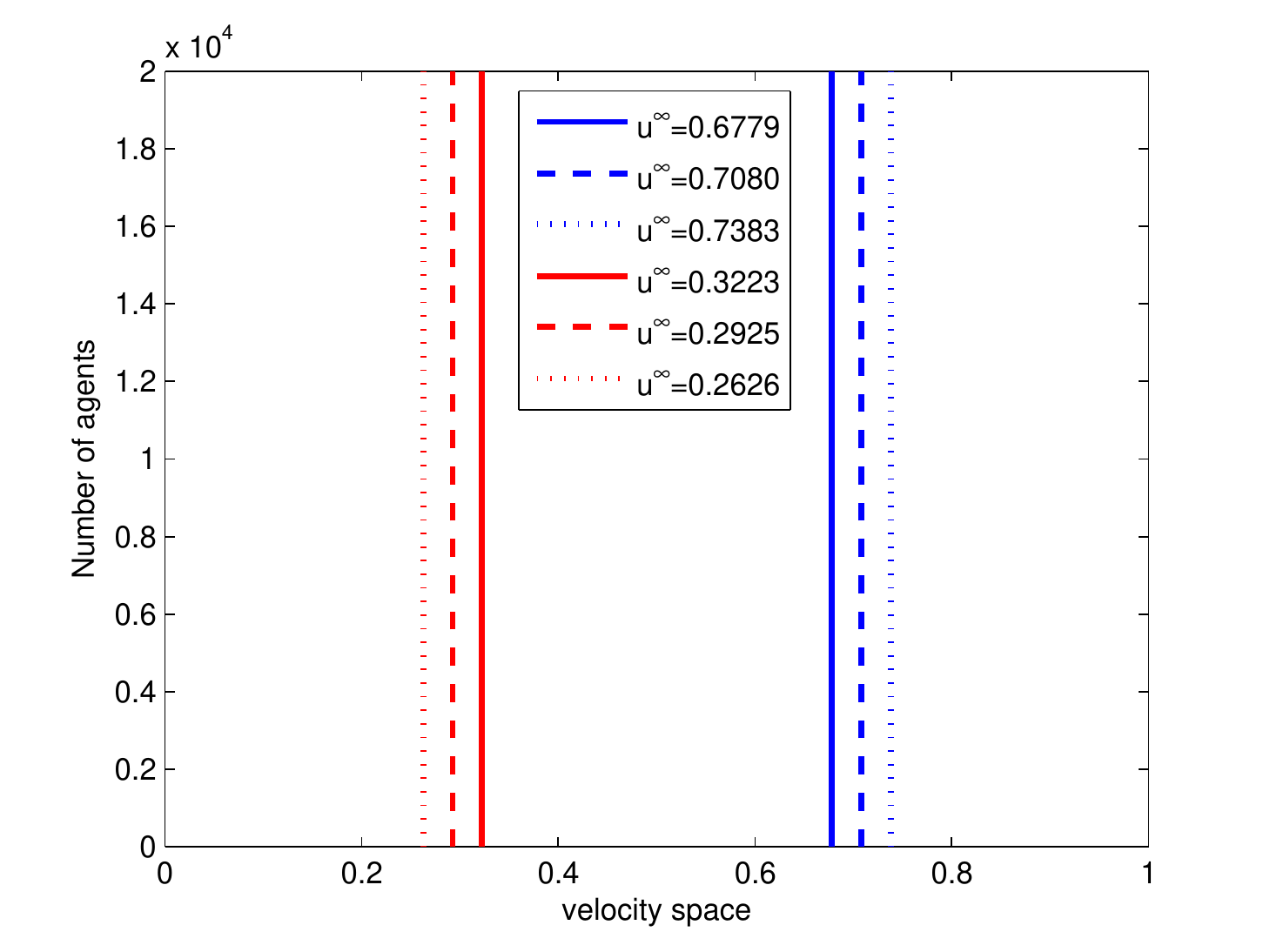}
		\caption{Left: evolution towards equilibrium of the macroscopic speed $u$ starting from different initial conditions. Right: distribution of the microscopic velocities at equilibrium.\label{fig:DSMC}}
	\end{figure}
	
	This aspect is also numerically investigated in Figure \ref{fig:DSMC} in which we propose a DSMC (Direct Simulation Monte Carlo) simulation for the kinetic equation \eqref{eq:BoltzmannModel} with the microscopic rules \eqref{eq:MicroRules} taking $V_A=V_B=u$. In particular, as in several kinetic models, we use the Nanbu-like asymptotic method \cite{Nanbu} which we reformulate in Algorithm \ref{alg:Nanbu} for the model \eqref{eq:BoltzmannModel}-\eqref{eq:MicroRules}. For further details see e.g. the paper \cite{AlbiPareschi} on microscopic models of flocking and swarming or the paper \cite{HertyKlarPareschi} on traffic flow models or the book \cite{PareschiToscaniBOOK}.
	
	\begin{algorithm}[t!]
		\caption{Nanbu algorithm for the model~\eqref{eq:BoltzmannModel}-\eqref{eq:MicroRules} with $V_A=V_B=u$ and $\epsilon=1$.}
		\label{alg:Nanbu}
		\begin{algorithmic}[1]
			\STATE Take $N$ samples of the microscopic velocities $v_j^0$, $j=1,\dots,N$ from the initial density $f^0(v)$;
			\STATE fix $\rho_0$ being the initial density of vehicles and thus the probability of changing velocity $P$;
			\FOR{$n=0$ \TO $M$}
				\STATE compute the macroscopic speed $u^n=\frac{1}{N} \sum_{j=1}^N v_j^i$;
				\FOR{$j=1$ \TO $N$}
					\IF{$P\geq 1/2$}
						\STATE sample $\xi$ from a zero mean distribution $\eta$ living on $\left[-\frac{1-P}{\sqrt{P}},\frac{1-P}{\sqrt{P}}\right]$;
					\ELSE
						\STATE sample $\xi$ from a zero mean distribution $\eta$ living on $\left[-\frac{P}{\sqrt{1-P}},\frac{P}{\sqrt{1-P}}\right]$;
					\ENDIF
					\IF{$v_j^n\leq u^n$}
						\STATE compute $v_j^{n+1}=v_j^n+\sqrt{P}(u-v_j^n)\left(\sqrt{P}+\xi\right)$;
					\ELSE
						\STATE compute $v_j^{n+1}=v_j^n-\sqrt{1-P}(v_j^n-u)\left(\sqrt{1-P}+\xi\right)$.
					\ENDIF
				\ENDFOR
			\ENDFOR
		\end{algorithmic}
	\end{algorithm}
	
	In Figure \ref{fig:DSMC}, Algorithm \ref{alg:Nanbu} is applied with $N=20000$ samples and $M=40$ time steps. In the left panel we show the evolution towards equilibrium of the macroscopic speed $u$ for two values of the density of vehicles. Precisely, the blue lines are referred to $\rho=0.2$, and thus $P=0.8$ using the definition \eqref{eq:Plaw}, while the red lines are referred to $\rho=0.8$, and thus $P=0.2$. We use different hatchings for the lines in order to mark the evolution obtained with different starting values of the mean speed. We notice that we find different equilibrium speeds depending on the initial condition $u(0)$ while the trend is influenced by the value of $P$. In fact, in each case, if $P > 1/2$ (resp. $P < 1/2$), the mean speed increases (resp. decreases), see equation \eqref{eq:EvolutionSynchro}, until it reaches the equilibrium value $u^\infty$ which is function of $u(0)$. Thus, as usual, the stationary mean speed can be computed by using the solution of the kinetic model while the macroscopic equation \eqref{eq:EvolutionSynchro} is not sufficient to determine $u^\infty$.
	
	For instance, focus on the case $P=0.8$. The evolution marked with the solid line is obtained by sampling the particles from a uniform distribution on $[0,1]$ at initial time. Thus the initial value of the mean speed $u$ is $u(0)\approx 0.5$. Since $P>1/2$, the speed $u$ increases during time evolution until the equilibrium value $u^\infty\approx0.6779$. This value is used as initial condition of the evolution marked with the dashed line. The trend is again increasing, since $P>1/2$, and we obtain a different equilibrium value $u^\infty\approx0.7080$. The same procedure is applied in order to compute the evolution marked with the dotted line and similar considerations hold for the case $P=0.2$.
		
	Finally, in the right panel of Figure \ref{fig:DSMC} we show that all the equilibrium values of the mean speed are indeed reached when all vehicles travel at the same velocity, i.e. $f(v)=\rho\delta_{u^\infty}(v)$ being $u^\infty$ function of $u(0)$.
	
	Although we can say nothing on the uniqueness of the synchronized steady solutions, we emphasize that this situation can occur at each density, while in experimental studies it occurs only for certain values of $\rho$, see e.g. \cite{kerner2004BOOK}. For this reason, we will not analyze deeply this aspect in the following but we will show that this solution is preserved by the Fokker-Planck approximation.
%\end{remark}

\section{Fokker-Planck approximation} \label{sec:FokkerPlanck}

The main drawback of a Boltzmann model is the complexity of the integral collision operator which makes the investigation of steady states difficult. Observe that the explicit knowledge of the asymptotic distribution is crucial both for computing the diagrams of traffic without expensive numerical simulations and for developing closure laws for macroscopic model. In fact, a kinetic model provides quite naturally speed-density relations arising from the microscopic interaction rules among the vehicles rather than from intuitive considerations.

In order to obtain simpler kinetic models, the goal is to replace the integral operator with differential operators. To this end, starting from the Boltzmann model \eqref{eq:BoltzmannModel} we derive a Fokker-Planck model using a technique similar to the one introduced for instance in \cite{Desvillettes,DiPernaLions,PareschiToscaniVillani,Villani1999} and already used in \cite{HertyPareschi10} for non-symmetric interaction rules, or in \cite{PareschiToscani2006} for gas-dynamics and in \cite{Toscani2006} for opinion formation. However, as discussed at the end of Section \ref{sec:Comparison}, here, in contrast with \cite{HertyPareschi10}, the interaction rules \eqref{eq:MicroRules} allow us to simplify the procedure of the grazing collision limit avoiding the definition of several free parameters and additional assumptions on the test function.

From now on, let us to consider $\phi\in C_c^{\infty}(\V)$. Starting from \eqref{eq:BoltzmannModel}, we first scale the time using the same parameter $\epsilon$ appearing in \eqref{eq:MicroRules}, setting
\[
	\tau=\epsilon t, \quad f(t,v)=\tilde{f}(\tau,v).
\]
For brevity, in the following we omit the use of the tilde and we indicate by $f$ the scaled distribution function. Then, braking up the integrals, equation \eqref{eq:BoltzmannModel} writes as
\[
	\partial_{\tau} \int_0^{\vm} \phi(w) f(\tau,w) \mathrm{d}w = \mathcal{A} + \mathcal{B}.
\]
where
\[
	\mathcal{A} = \frac{1}{\epsilon} \left\langle \int_0^u (\phi(v)-\phi(\vb)) f(\tau,\vb) \mathrm{d}\vb \right\rangle, \quad \mathcal{B} = \frac{1}{\epsilon} \left\langle \int_u^{\vm} (\phi(v)-\phi(\vb)) f(\tau,\vb) \mathrm{d}\vb \right\rangle.
\]
are the acceleration and braking collision kernels respectively. We study explicitly only the case $\vb < u$. Similar computations can be easily repeated for $\vb > u$. %Thus, consider the acceleration part of the collision kernel
%\begin{equation}\label{eq:CollisionOpAcc}
%	\frac{1}{\epsilon} \langle \int_0^{u} (\phi(v)-\phi(\vb)) f(\tau,\vb) \mathrm{d}\vb \rangle.
%\end{equation}
Since for small values of $\epsilon$ we have $v \approx \vb$, we can compute the Taylor expansion up to second order of $\phi(v)-\phi(\vb)$ around $\vb$:
\[
	\phi(v)-\phi(\vb) = \phi'(\vb)(v-\vb) + \frac12 \phi''(\overline{\vb})(v-\vb)^2, \quad \overline{\vb} \in (\vb,v)
\]
with $v$ defined in this case by \eqref{eq:MicroRulesAcc}. Substituting $v-\vb$ in the above expansion, the acceleration part of the collision kernel becomes 
\begin{align*}
	\mathcal{A} = & \frac{1}{\epsilon} \left\langle \int_0^u (\epsilon P \Dv_A + \sqrt{\epsilon P} \Dv_A \xi) \phi'(\vb) f(\tau,\vb) \mathrm{d}\vb \right\rangle \\
	& + \frac{1}{2\epsilon} \left\langle \int_0^u (\epsilon P \Dv_A + \sqrt{\epsilon P} \Dv_A \xi)^2 \phi''(\vb) f(\tau,\vb) \mathrm{d}\vb \right\rangle \\
	& + \frac{1}{\epsilon} \langle R_{\mathcal{A}}(f,\eta,\phi) \rangle,
\end{align*}
where
\[
	R_{\mathcal{A}}(f,\eta,\phi)=\frac12 \int_0^{u} (\epsilon P \Dv_A + \sqrt{\epsilon P} \Dv_A \xi)^2 (\phi''(\overline{\vb})-\phi''(\vb)) f(\tau,\vb) \mathrm{d}\vb
\]
is the remaining term that vanishes as $\epsilon \to 0$. In fact, since $\phi$ is a smooth function, the second derivative $\phi''$ is Lipschitz continuous, thus
\[
\exists \, \mathrm{L} \geq 0 \text{ s.t. } \abs{\phi''(\overline{\vb})-\phi''(\vb)} \leq \mathrm{L} \abs{\overline{\vb}-\vb}.
\]
Recalling that $\overline{\vb} \in (\vb,v)$, we have $\abs{\overline{\vb}-\vb}=\theta\abs{v-\vb}$ for some $\theta\in(0,1)$. Therefore
\[
\mathrm{L} \abs{\overline{\vb}-\vb} \leq \mathrm{L} \abs{v-\vb} \leq \mathrm{L} \mathrm{C} \sqrt{\epsilon}
\]
where $\mathrm{C}=\mathrm{C}(\max\{\Dv_A\},\xi_{\max})$ and
\begin{align*}
\frac{1}{\epsilon} \abs{\langle R_{\mathcal{A}}(f,\eta,\phi) \rangle} & \leq \frac{1}{2\epsilon} \int_{-\infty}^{+\infty} \int_0^u (v-\vb)^2 \abs{\phi''(\overline{\vb})-\phi''(\vb)} f(\tau,\vb) \mathrm{d}\vb \eta(\xi) \mathrm{d}\xi \\
& \leq \frac{\mathrm{L} \mathrm{C} \epsilon^{\frac32}}{2\epsilon} \int_{-\infty}^{+\infty} \int_0^u f(\tau,\vb) \mathrm{d}\vb \eta(\xi) \mathrm{d}\xi \xrightarrow{\epsilon\to 0^+} 0.
\end{align*}
Thus, computing the limit for $\epsilon \to 0$, corresponding to a situation in which the interactions are frequent but they produce only small variations in the output velocities, and using the identities
\[
	\int_{-\infty}^{+\infty} \eta(\xi) \mathrm{d}\xi = 1, \quad \int_{-\infty}^{+\infty} \xi \eta(\xi) \mathrm{d}\xi = 0, \quad \int_{-\infty}^{+\infty} \xi^2 \eta(\xi) \mathrm{d}\xi = \sigma^2,
\]
we obtain
\[
	\lim_{\epsilon \to 0} \mathcal{A} = P \int_0^u \Dv_A \phi'(\vb) f(\tau,\vb) \mathrm{d}\vb + \frac{\sigma^2 P}{2} \int_0^u \left(\Dv_A\right)^2 \phi''(\vb) f(\tau,\vb) \mathrm{d}\vb.
\]

The same considerations apply also to the braking part of the collision kernel. In particular, one verifies that $R_{\mathcal{B}}(f,\eta,\phi) \to 0$ as $\epsilon \to 0$ and
\begin{align*}
	\lim_{\epsilon \to 0} \mathcal{B} = & -(1-P) \int_u^{\vm} \Dv_B \phi'(\vb) f(\tau,\vb) \mathrm{d}\vb + \frac{\sigma^2 (1-P)}{2} \int_u^{\vm} \left(\Dv_B\right)^2 \phi''(\vb) f(\tau,\vb) \mathrm{d}\vb.
\end{align*}

Finally, in order to derive the strong formulation of the Fokker-Planck equation, we write the weak form resulting from the grazing collision limit as
\begin{equation}\label{eq:WeakFP}
\begin{aligned}
	\partial_{\tau} \int_0^{\vm} \phi(v) f(\tau,v) \mathrm{d}v =& P \int_0^{\vm} \chi_{[0,u]}(\vb) \phi'(\vb) \Dv_A f(\tau,\vb) \mathrm{d}\vb \\
	& - (1-P) \int_0^{\vm} \chi_{[u,\vm]}(\vb) \phi'(\vb) \Dv_B f(\tau,\vb) \mathrm{d}\vb\\
	& + \frac{\sigma^2 P}{2} \int_0^{\vm} \chi_{[0,u]}(\vb) \phi''(\vb) \left(\Dv_A\right)^2 f(\tau,\vb) \mathrm{d}\vb \\
	& + \frac{\sigma^2 (1-P)}{2} \int_0^{\vm} \chi_{[u,\vm]}(\vb) \phi''(\vb) \left(\Dv_B\right)^2 f(\tau,\vb) \mathrm{d}\vb.
\end{aligned}
\end{equation}

Integrating by parts each term of the right-hand side of \eqref{eq:WeakFP} and using the fact that $\phi(0)=\phi(\vm)=\phi'(0)=\phi'(\vm)=0$, we recover the following strong form of the Fokker-Planck equation
\begin{equation}\label{eq:StrongFP}
\begin{aligned}
	\partial_{\tau} f(\tau,v) = & - \partial_v \left[ f(\tau,v) \left( P \chi_{[0,u]}(v) \Dv_A - (1-P) \chi_{[u,\vm]}(v) \Dv_B \right) \right] \\
	& + \frac{\sigma^2}{2} \partial_{vv} \left[ f(\tau,v) \left( P \chi_{[0,u]}(v) \left(\Dv_A\right)^2 + (1-P) \chi_{[u,\vm]}(v) \left(\Dv_B\right)^2 \right) \right]
\end{aligned}
\end{equation}
as limit of the Boltzmann model \eqref{eq:BoltzmannModel} based on the microscopic rules \eqref{eq:MicroRules}. Observe that \eqref{eq:StrongFP} can be rewritten in the usual formulation of a Fokker-Planck-type model for traffic flow:
\begin{equation}\label{eq:GeneralFP}
	\partial_{\tau} f(\tau,v) + \partial_v \left[ f(\tau,v) B(v,u,\rho) - D(v,u,\rho) \partial_v f(\tau,v) \right] = 0,
\end{equation}
see for instance the prototype example introduced in \cite{IllnerKlarMaterne}. In the present framework, the acceleration/braking operator $B(v,u,\rho)$ is
\begin{equation} \label{eq:AccBrakOp}
\begin{aligned}
	B(v,u,\rho) = & P \chi_{[0,u]}(v) \Dv_A - (1-P) \chi_{[u,\vm]}(v) \Dv_B \\
	& - \frac{\sigma^2}{2} \partial_v \left( P \chi_{[0,u]}(v) \left(\Dv_A\right)^2 + (1-P) \chi_{[u,\vm]}(v) \left(\Dv_B\right)^2 \right)
\end{aligned}
\end{equation}
while the diffusive operator $D(v,u,\rho)$ is
\begin{equation} \label{eq:DiffusiveOp}
	D(v,u,\rho) = \frac{\sigma^2}{2} \left( P \chi_{[0,u]}(v) \left(\Dv_A\right)^2 + (1-P) \chi_{[u,\vm]}(v) \left(\Dv_B\right)^2 \right).
\end{equation}
As we will see in Section \ref{sec:FundamentalDiagrams}, the stationary solution of \eqref{eq:GeneralFP} can be obtained avoiding the explicit computation of derivatives of characteristic functions that may result in Dirac functions. 

\begin{remark}[Boundary conditions]
	In contrast to the Boltzmann model, in which $\rho$ is automatically conserved, in the Fokker-Planck model the density remains constant in time and thus it is an invariant of the model \eqref{eq:GeneralFP} if, for any choices of $V_A$ and $V_B$, the following boundary condition holds
	\begin{equation}\label{eq:BoundCond}
		f(\tau,v) B(v,u,\rho) - D(v,u,\rho) \partial_v f(\tau,v)\Big\rvert_{v=0}^{v=\vm} = 0
	\end{equation}
	at each time $\tau$.
\end{remark}

\section{Fundamental diagrams} \label{sec:FundamentalDiagrams}

In this section, we investigate the fundamental diagrams of traffic flow resulting from the Fokker-Planck model \eqref{eq:GeneralFP}. Since we are interested in reproducing the features of experimental diagrams, as the phase transition and the multivalued behavior, we show that a particular choice of the desired speeds allows us to obtain diagrams which exhibit the qualitative properties of data.

As usual, macroscopic diagrams are recovered by computing the macroscopic quantities as moments of the time-asymptotic kinetic distribution $f^\infty(v)$. Notice that, using the Fokker-Planck model, $f^\infty(v)$ can be computed easily by solving an ordinary differential equation. In fact, at the steady state, the time derivative of the distribution function in \eqref{eq:GeneralFP} must be zero and thus we are led to the following homogeneous ODE:
\begin{equation} \label{eq:ODE}
	f'(v) = \frac{B(v,u,\rho)}{D(v,u,\rho)} f(v), \quad v\in[0,\vm]
\end{equation}
whose solution can be written in form of exponentials once the cases $v < u$ and $v > u$ are distinguished.

Using the separation of variables method, the time-asymptotic solution for general jumps of velocity $\Dv_A$ and $\Dv_B$ can be computed explicitly substituting \eqref{eq:AccBrakOp} and \eqref{eq:DiffusiveOp} in \eqref{eq:ODE} and it is given by
\begin{subequations}
	\label{eq:SteadySolution}
	\begin{align}
	f^\infty(v)=f^\infty(u^-) \left( \frac{\Dv_A(v,\rho)|_{v=u}}{\Dv_A(v,\rho)} \right)^2 \exp\left(-\frac{2}{\sigma^2} \int_v^u \frac{1}{\Dv_A(s,\rho)} \mathrm{d}s\right), \quad & \text{if \; $v < u$} \label{eq:SteadySolutionAcc}\\
	f^\infty(v)=f^\infty(u^+) \left( \frac{\Dv_B(v,u,\rho)|_{v=u}}{\Dv_B(v,u,\rho)} \right)^2 \exp\left(-\frac{2}{\sigma^2} \int_u^v \frac{1}{\Dv_B(s,u,\rho)} \mathrm{d}s\right), \quad & \text{if \; $v > u$} \label{eq:SteadySolutionBrak}
	\end{align}
\end{subequations}
which indeed depends on three unknown parameters, the mean speed $u$, $f^\infty(u^-)$ and $f^\infty(u^+)$. The latter two are the left and right limits, respectively, of $f^\infty$ at $v=u$, that here we take as two integration constants.

Equivalently we can think of \eqref{eq:SteadySolutionAcc} as parametrized by $r f^\infty(u^+)$, with
\begin{equation} \label{eq:r}
	r=\frac{f^\infty(u^-)}{f^\infty(u^+)}.
\end{equation}
Notice that imposing the constraint on the zeroth moment (i.e. mass conservation) of the distribution function
\begin{equation}\label{eq:ZeroConstraint}
	\rho = \int_0^{\vm} f^\infty(v) \mathrm{d}v
\end{equation}
is not sufficient to define uniquely both $r$ and $f^\infty(u^+)$. In fact, the constraint \eqref{eq:ZeroConstraint} leads to a single equation
\[
	f^\infty(u^+) \left( r \rho^- + \rho^+ \right) = \rho
\]
with two unknowns, where
\[
	\rho^- = \int_0^u f^\infty(v) \mathrm{d} v, \quad \rho^+ = \int_u^{\vm} f^\infty(v) \mathrm{d} v
\]
are the partial densities of slow and fast vehicles (with respect to the mean speed). Thus, we obtain a family of steady solutions and without loss of generality we assume that $f^\infty$ will be parametrized by the ratio $r$, so that
\begin{equation} \label{eq:IntConst}
	f^\infty(u^-) = \frac{r\rho}{r \rho^- + \rho^+}, \quad f^\infty(u^+) = \frac{\rho}{r \rho^- + \rho^+}.
\end{equation}
%We find them by imposing the constraint on the zeroth moment of the distribution function
%\begin{equation}\label{eq:ZeroConstraint}
%	\rho = \int_0^{\vm} f^\infty(v) \mathrm{d}v.
%\end{equation}
%Actually, notice that equation \eqref{eq:ZeroConstraint} is not sufficient to define uniquely the two constants of integration and thus it will lead to a family of steady solutions. Without loss of generality we impose that $f^\infty$ is parametrized by the ratio
%\begin{equation} \label{eq:r}
%	r=\frac{f^\infty(u^-)}{f^\infty(u^+)}.
%\end{equation}

For $r=1$ we obtain continuous asymptotic distributions at $v=u$, while for $r \neq 1$ we allow for discontinuous distributions at $v=u$. This is the mathematical key which allows us to recover multivalued diagrams of traffic as a result of the existence of a one-parameter family of solutions. In fact, for each value of the ratio $r$ we obtain a corresponding time-asymptotic solution. Thus different values of $r$ give different equilibrium speeds, for a fixed density $\rho$. Usually, in the literature only continuous solutions were considered, see e.g. \cite{IllnerKlarMaterne} in which a \textquotedblleft small\textquotedblright \ multivalued behavior is obtained by means of the probability of lane changing. However, experimental fundamental diagrams typically are discontinuous around a critical density which defines a sort of phase change. Thus it is quite natural to expect discontinuous asymptotic distributions.

The equilibrium value of the macroscopic speed $u$ is obtained by solving a non-linear equation coming from the constraint on the first moment of the distribution function and from \eqref{eq:ZeroConstraint}
\begin{equation}\label{eq:OneConstraint}
	\int_0^{\vm} v f^\infty(v) \mathrm{d}v = \rho u = u \int_0^{\vm} f^\infty(v) \mathrm{d}v.
\end{equation}
The identity \eqref{eq:OneConstraint} provides the nonlinear equation $R(u)=0$, with
\begin{equation} \label{eq:R}
	R(u):=\int_0^{\vm} (u-v) f^\infty(v) \mathrm{d}v.
\end{equation}
Solving it for each value of the density in $[0,\rho_{\max}]$ we find the speed at equilibrium which defines therefore a relation between the density $\rho$ and the mean speed $u$ of the flow. Since $R(u)=0$ defines a nonlinear equation, in principle it might have zero or more than one solution.

Finally, observe that without diffusion, namely if $\sigma^2 \to 0$, then the two exponentials in \eqref{eq:SteadySolution} tend to $1$. This means that the general steady state \eqref{eq:SteadySolution}, and thus the equilibrium speed, is influenced only by the choice of the desired speed $V_A$ and $V_B$. In fact, in this case the microscopic rules \eqref{eq:MicroRules} become purely deterministic, provided $P(\rho)$.

\begin{remark}[Synchronized flow]
	In Paragraph \textquotedblleft Synchronized flow\textquotedblright \ in Section \ref{sec:DerivationBoltzmann} we have proved that the distribution function $f(v) = \rho \delta_u(v)$ is a stationary solution of the Boltzmann model \eqref{eq:BoltzmannModel} based on the interactions \eqref{eq:MicroRules}, when $V_A=V_B=u$. This solution is preserved in the Fokker-Planck approximation. We use the weak form of the model \eqref{eq:GeneralFP} in order to prove that $f(v) = \rho \delta_u(v)$ is, in distributional sense, the trivial steady state of the Fokker-Planck-type equation. In fact, assuming $\phi$ as a test function with compact support in $\V$, $f(v)$ is a weak stationary solution if
	\begin{gather*}
		\int_0^{\vm} \phi(v) \partial_v \left( f(v) B(v,u,\rho) - D(v,u,\rho) \partial_v f(v) \right) \mathrm{d}v = 0.
	\end{gather*}
	Integrating by parts the left-hand side of the above equation we get
	\[
		- \int_0^{\vm} \phi'(v) B(v,u,\rho) f(v) \mathrm{d}v - \int_{0}^{\vm} \partial_v ( \phi'(v) D(v,u,\rho) ) f(v) \mathrm{d}v.
	\]
	Taking $f(v) = \rho \delta_u(v)$ we obtain
	\begin{gather*}
		\int_0^{\vm} \phi'(v) B(v,u,\rho) f(v) \mathrm{d}v = \rho B(u,u,\rho) = 0 \\
		\int_{0}^{\vm} \partial_v ( \phi'(v) D(v,u,\rho) ) f(v) \mathrm{d}v = \rho \phi''(v) D(u,u,\rho) + \rho \phi'(v) \partial_v D(v,u,\rho)|_{v=u} = 0.
	\end{gather*}
	In fact, the choice $V_A = V_B = u$ leads to a degeneracy at $v = u$ in the acceleration/braking operator $B$, in the diffusion operator $D$ and in its derivative, since:
	\begin{align*}
		B(v,u,\rho) = & P (u-v) \chi_{[0,u]}(v) - (1-P) (v-u) \chi_{[u,\vm]}(v) - \partial_v D(v,u,\rho)\\
		\partial_v D(v,u,\rho) = & \frac{\sigma^2}{2} \left[ -P (u-v)^2 \delta_u(v) - 2P (u-v) \chi_{[0,u]}(v) \right.\\
		& \left. + (1-P) (v-u)^2 \delta_u(v) + 2(1-P) (v-u) \chi_{[u,\vm]}(v) \right].
	\end{align*}
\end{remark}

Now, let us study two particular models obtained with different choices of the desired speeds. In both cases, we formulate the explicit steady state and the expression of the function $R(u)$ which defines the equilibrium speed. We show that the two choices provide meaningful diagrams of traffic (flux-density and speed-density relationships).

\subsection{Case 1}\label{sec:Case1} Here we assume
	\begin{equation} \label{eq:DesiredSpeeds1}
		V_A=v+P(\vm-v), \quad V_B=Pu.
	\end{equation}
	This means that, in acceleration, the desired speed is a certain velocity in $[v,\vm]$ depending on the value of the probability of accelerating $P$ and thus on the density of vehicles. Instead, when braking, the desired speed is a velocity in $[0,u]$ and depends only on macroscopic quantities, as discussed in Section \ref{sec:Boltzmann}. With this choice and evaluating explicitly the integrals appearing in \eqref{eq:SteadySolution}, the asymptotic distribution $f^\infty$ becomes
	\begin{subequations}
		\label{eq:Case1}
		\begin{align}
			f^\infty(v) = f^\infty(u^-) \left( \frac{\vm-u}{\vm-v} \right)^{c^A}, \quad & \text{if \; $v < u$} \label{eq:Case1Acc}\\
			f^\infty(v) = f^\infty(u^+) \left( \frac{u-Pu}{v-Pu} \right)^{c^B}, \quad & \text{if \; $v > u$} \label{eq:Case1Brak} 
		\end{align}
	\end{subequations}
	where $c^A=\frac{2}{\sigma^2 P}+2$ and $c^B=\frac{2}{\sigma^2}+2$. In the sequel we consider $P=1-\rho$. Recall that the equilibrium speed is found by solving the constraint \eqref{eq:OneConstraint}, which leads to the non-linear equation $R(u)=0$. For the steady solution \eqref{eq:Case1} we have
	\begin{align*}
		R(u) = & f^\infty(u^-) \underbrace{\int_0^u (u-v) \left( \frac{\vm-u}{\vm-v} \right)^{c^A} \mathrm{d}v}_{R_A(u)} - f^\infty(u^+) \underbrace{\int_u^{\vm} (v-u) \left( \frac{u-Pu}{v-Pu} \right)^{c^B} \mathrm{d}v}_{R_B(u)}.
	\end{align*}
	Computing explicitly both integrals we obtain
	\begin{subequations}\label{eq:Case1R}
	\begin{align}
		R_A(u)=&\frac{(\vm-u)^2}{(c^A-2)(c^A-1)}-\frac{\vm^{2-c^A}(\vm-u)^{c^A}}{(c^A-2)(c^A-1)}-\frac{u\vm^{1-c^A}(\vm-u)^{c^A}}{c^A-1} \label{eq:Case1RA}\\
		R_B(u)=&\frac{(u-Pu)^2}{(c^B-2)(c^B-1)}-\frac{(u-Pu)^{c^B}(\vm-Pu)^{2-c^B}}{(c^B-2)(c^B-1)} \label{eq:Case1RB} \\
		& -\frac{(u-Pu)^{c^B}(\vm-u)(\vm-Pu)^{1-c^B}}{(c^B-1)}
	\end{align}
	\end{subequations}
	Unfortunately, since $R(0)=R(1)=0$ we cannot ensure that $\exists u\in [0,\vm]$ such that $R(u)=0$. However, numerically we observe that the nonlinear equation $R(u)=0$ does not provide more than one solution $u\in(0,1)$.
	
	\begin{figure}[t!]
		%\centering
		%\includegraphics[width=0.52\linewidth]{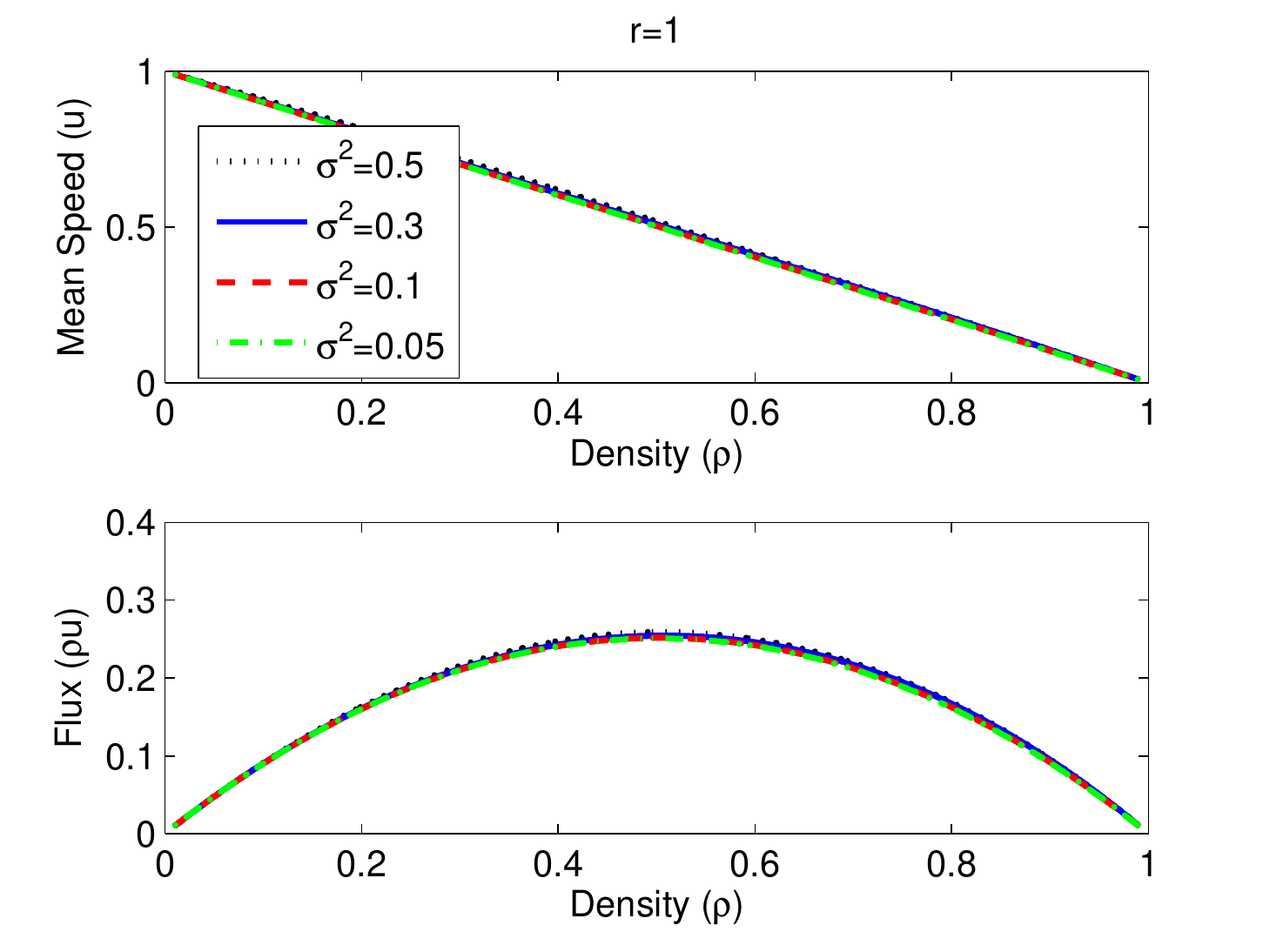}
		\includegraphics[width=0.52\textwidth]{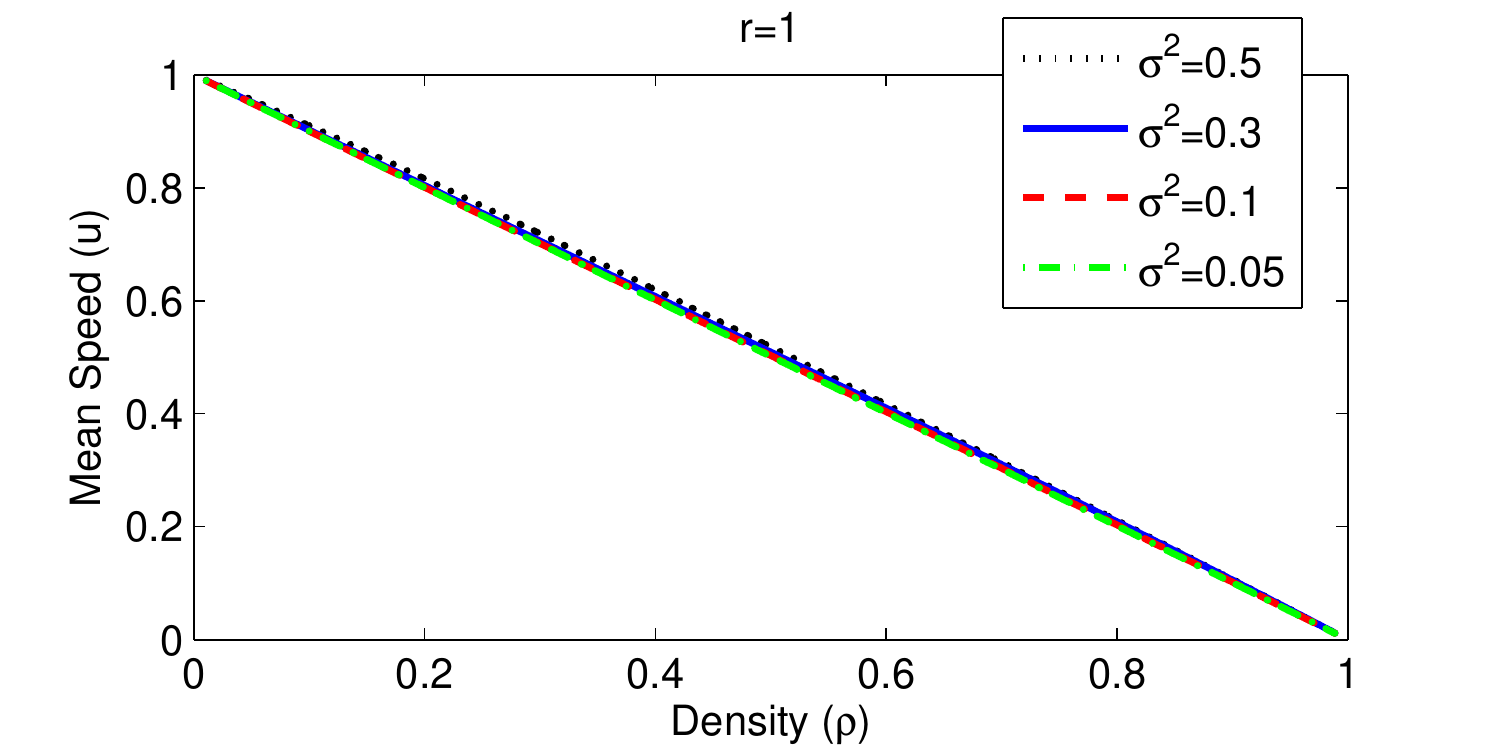}
		\includegraphics[width=0.52\textwidth]{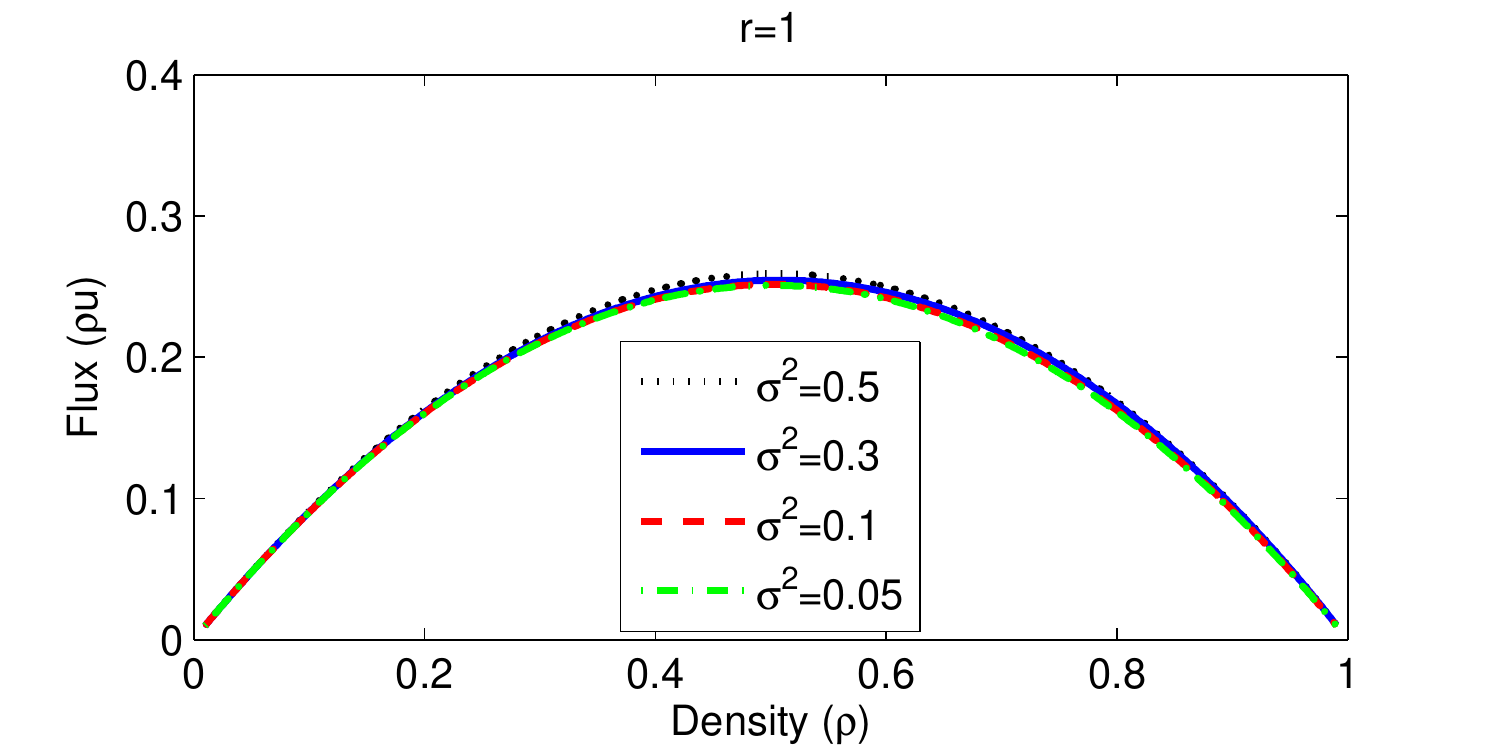}
		\caption{Diagrams of traffic using the desired speeds \eqref{eq:DesiredSpeeds1}. We take $r=1$ and we show the dependence on the variance $\sigma^2$.\label{fig:Case1FD-sigma}}
	\end{figure}
	
	\begin{figure}[t!]
		%\centering
		\includegraphics[width=0.52\textwidth]{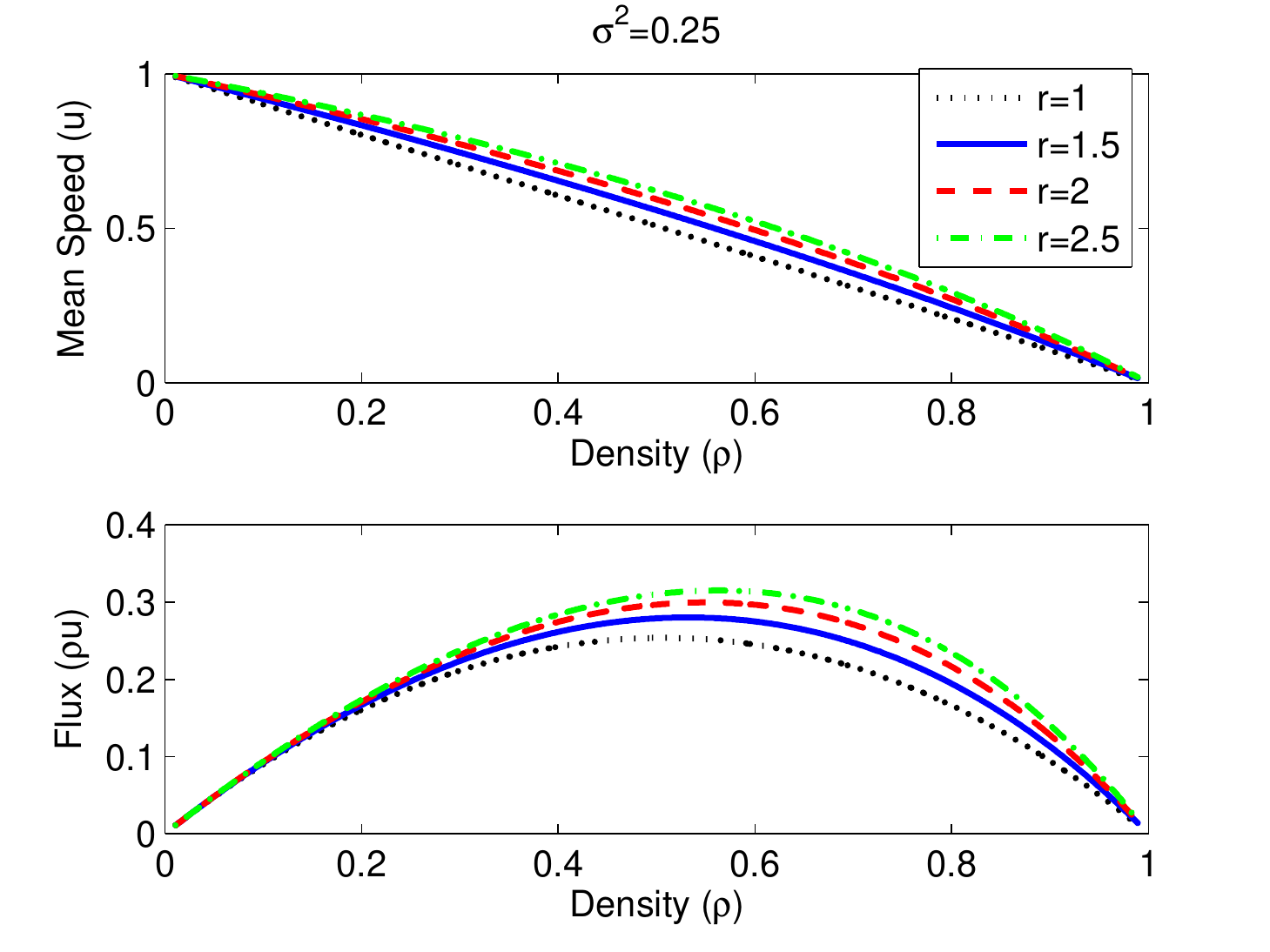}
		%\hfill
		\includegraphics[width=0.52\textwidth]{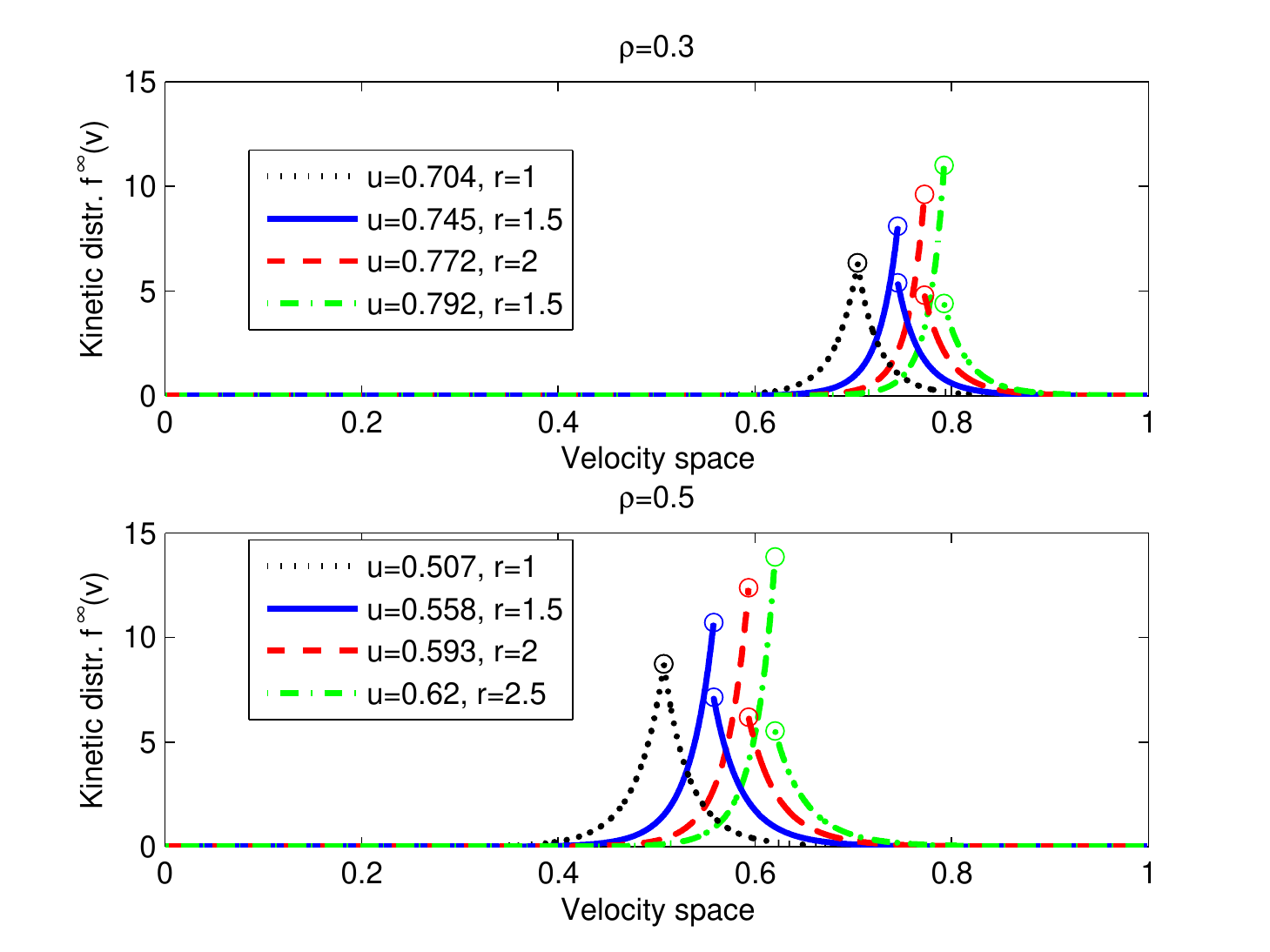}
		\caption{Left panels: diagrams of traffic using the desired speeds \eqref{eq:DesiredSpeeds1}. We take $\sigma^2=0.25$ and we study the dependence on the parameter $r$ which permits to find a region of scattered values. Right panels: we plot the related equilibrium distributions for $\rho=0.3$ (top) and $\rho=0.5$ (bottom). The circles define the values $f(v)$ for $v\to u^-$ and $v\to u^+$.\label{fig:Case1FD-r}}
	\end{figure}

	In Figure~\ref{fig:Case1FD-sigma} and~\ref{fig:Case1FD-r} we show the diagrams of traffic by varying the parameters of the model. In particular, in Figure~\ref{fig:Case1FD-sigma} we study the influence of %$\kappa$ (left panels) and $\sigma^2$ (right panels)
	$\sigma^2$ on the speed and on the flux. The diagrams are reproduced by considering only continuous steady state, namely we fix $r=1$ and then we compute $f^\infty(u^+)$ and $f^\infty(u^-)$ from equation \eqref{eq:ZeroConstraint} and \eqref{eq:r}. The equilibrium values are obtained by solving numerically $R(u)=0$ and the top plots show the speed-density diagrams, while the bottom ones show the flux-density diagrams. In all simulations we set $\rhom = \vm = 1$. The probability of accelerating $P$ is taken as prescribed in \eqref{eq:Plaw}.
	%The parameter $\kappa$ seems to control the symmetry of the fundamental diagram. In fact, for increasing $\kappa$ we observe that the value of the density in which there is the transition from the free phase and the congested phase of traffic tends to zero. Moreover, as $\kappa$ increases the absolute value of the slope of the flux curves at high densities becomes smaller and thus we have a smoother evolution towards the jammed traffic flow. Instead, the
	The variance $\sigma^2$ seems to produce only small variations in the equilibrium values of the speed. More precisely, it does not influence deeply the free and the congested phase of traffic but only the values near the phase transition.	Therefore, acting on %$\kappa$ and
	$\sigma^2$ we are not able to reproduce a meaningful scattering of the flux values.
	
	In the left panels of Figure \ref{fig:Case1FD-r} we consider the diagrams obtained with %$\kappa=2$ and
	$\sigma^2=0.25$, but modifying the parameter $r$, see equation \eqref{eq:r}. This allows us to take into account different steady state according to the jump at $v=u$. The corresponding stationary solutions appear in the right panels for two values of $\rho=0.3, 0.5$. We point out that now we obtain a more significant and accentuated dispersion of the values of macroscopic quantities in a large range of density. Note that the dispersion is very small for low and high densities, which is in accord with experimental surveys. The circles in the right plots are in order to show the values of the steady states $f^\infty(v,r)$ for $v\to u^-$ and $v\to u^+$, which coincide only for the black dotted distribution corresponding to the continuous case, with $r=1$. The role and the meaning of the parameter $r$ will be further clarified in Section~\ref{sec:ExpData}.
	
	Now, we focus on an important result which establishes connections between the macroscopic closures %, see \cite{greenshields}, %, see e.g. \cite{GreenshieldsSymposium,Rosini},
	and the generic probability of accelerating $P$ used at the microscopic level in the present kinetic framework.
	\begin{theorem}\label{th:Greenshields}
		Let $r=1$, that is consider only continuous asymptotic kinetic distributions \eqref{eq:SteadySolution}. In the limit $\sigma^2 \to 0$, the kinetic model \eqref{eq:GeneralFP} with desired speeds \eqref{eq:DesiredSpeeds1} provides the macroscopic closure
		\begin{equation*}
			u(\rho) = \vm P(\rho), \quad 0 \leq \rho \leq \rhom,
		\end{equation*}
		where $P:[0,\rhom] \to [0,1]$ is a probability function.
	\end{theorem}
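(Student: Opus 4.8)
The plan is to solve the equilibrium condition $R(u)=0$, with $R$ as in \eqref{eq:R}, in the regime $\sigma^2\to 0$ by exploiting the closed-form expressions \eqref{eq:Case1RA}--\eqref{eq:Case1RB} for the two pieces $R_A$ and $R_B$. Since we set $r=1$, the integration constants in \eqref{eq:IntConst} satisfy $f^\infty(u^-)=f^\infty(u^+)>0$, whence $R(u)=f^\infty(u^-)\bigl(R_A(u)-R_B(u)\bigr)$ and the condition $R(u)=0$ is equivalent to $R_A(u)=R_B(u)$. It therefore suffices to compare the asymptotic sizes of $R_A$ and $R_B$ as $\sigma^2\to 0$, i.e. as the exponents $c^A=\tfrac{2}{\sigma^2 P}+2$ and $c^B=\tfrac{2}{\sigma^2}+2$ both diverge.

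First I would isolate the leading term in each of \eqref{eq:Case1RA} and \eqref{eq:Case1RB}. For fixed $u\in(0,\vm)$ the first summand of $R_A$ is $\tfrac{(\vm-u)^2}{(c^A-1)(c^A-2)}=O\bigl(1/(c^A)^2\bigr)$, whereas the remaining two summands carry the factor $\bigl(\tfrac{\vm-u}{\vm}\bigr)^{c^A}$ with base in $(0,1)$ and are thus exponentially small in $c^A$; hence $R_A(u)\sim\tfrac{(\vm-u)^2}{(c^A-1)(c^A-2)}$. The same reasoning applied to \eqref{eq:Case1RB} gives $R_B(u)\sim\tfrac{u^2(1-P)^2}{(c^B-1)(c^B-2)}$, the subleading terms now carrying the factor $\bigl(\tfrac{u(1-P)}{\vm-Pu}\bigr)^{c^B}$, whose base is strictly less than $1$ since $u(1-P)<\vm-Pu$ is equivalent to $u<\vm$.

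Next I would multiply $R_A(u)=R_B(u)$ by $(c^B-1)(c^B-2)$ to obtain, up to terms that remain exponentially small even after this polynomial rescaling,
\[
	(\vm-u)^2\,\frac{(c^B-1)(c^B-2)}{(c^A-1)(c^A-2)}=u^2(1-P)^2.
\]
A direct computation from the definitions of $c^A$ and $c^B$ shows $\tfrac{(c^B-1)(c^B-2)}{(c^A-1)(c^A-2)}\to P^2$ as $\sigma^2\to 0$, so in the limit the equilibrium speed solves $(\vm-u)^2P^2=u^2(1-P)^2$. Taking the positive square root, which is the admissible branch because $\vm-u>0$ for $u\in(0,\vm)$, yields $(\vm-u)P=u(1-P)$, i.e. $u=\vm P$; the opposite sign produces a root outside $(0,\vm)$ and is discarded. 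By continuity the identity extends to the endpoints $P=0,1$, giving the stated closure $u(\rho)=\vm P(\rho)$.

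I expect the main obstacle to be making the limit $\sigma^2\to 0$ rigorous uniformly in the unknown $u$: the equilibrium $u$ itself depends on $\sigma^2$, so the pointwise asymptotics above must be promoted to convergence of the rescaled map $(c^B-1)(c^B-2)\bigl(R_A-R_B\bigr)$, uniformly on compact subsets of $(0,\vm)$, to the limiting polynomial $(\vm-u)^2P^2-u^2(1-P)^2$. This requires uniform control of the exponentially small remainders, which is available as long as $u$ stays bounded away from $0$ and $\vm$; an elementary sign/monotonicity argument on the limiting polynomial then pins down the unique admissible root $u=\vm P$.
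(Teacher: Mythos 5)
Your proof is correct, but it follows a genuinely different (and logically stronger) route than the paper's. The paper's proof is a pure verification: it substitutes the candidate $u=\vm P(\rho)$ into $R_A$ and $R_B$ from \eqref{eq:Case1R} and checks that every term tends to $0$ as $\sigma^2\to 0$, concluding $R(\vm P(\rho))\to 0$. As your own asymptotics make clear, this check alone does not single out $\vm P(\rho)$: for \emph{every} fixed $u\in(0,\vm)$ one has $R_A(u)=O\bigl(1/(c^A)^2\bigr)$ and $R_B(u)=O\bigl(1/(c^B)^2\bigr)$, i.e. $R(u)\to 0$ at rate $\sigma^4$ everywhere, so the paper's argument as written would ``verify'' any speed. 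The actual content --- which your proof isolates and the paper's leaves implicit --- is that at $u=\vm P(\rho)$ the two leading $O(\sigma^4)$ terms cancel exactly: rescaling the equilibrium condition $R_A(u)=R_B(u)$ (legitimate since $r=1$ gives $f^\infty(u^-)=f^\infty(u^+)>0$) by $(c^B-1)(c^B-2)$ and using $\tfrac{(c^B-1)(c^B-2)}{(c^A-1)(c^A-2)}\to P^2$ yields the limit equation $(\vm-u)^2P^2=u^2(1-P)^2$, whose unique root in $(0,\vm)$ is $u=\vm P$ (the other branch $u=\vm P/(2P-1)$ lies outside $[0,\vm]$). So your approach buys two things the paper's does not: it explains \emph{why} $\vm P$ is the distinguished value rather than merely confirming it, and it gives uniqueness of the limiting equilibrium, which the paper only observes numerically. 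The one piece neither argument fully settles is the passage from pointwise asymptotics to convergence of the actual roots $u(\sigma^2)$; you correctly identify this (uniform convergence of the rescaled map on compact subsets of $(0,\vm)$, plus a sign argument on the limiting polynomial and an a priori bound keeping the equilibria away from $0$ and $\vm$) as the remaining technical work, and your sketch of it is sound.
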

	\begin{proof}
		In order to prove the statement we have to verify that $u=\vm P(\rho)$ is solution of the non-linear equation $R(u)=f^\infty(u^-) R_A(u) - f^\infty(u^+) R_B(u)$ for $\sigma^2 \to 0$, with $R_A(u)$ and $R_B(u)$ defined in \eqref{eq:Case1R}. Substituting $u=\vm P(\rho)$ in the expression of $R_A(u)$ we find
		\[
			R_A(\vm P(\rho))=\frac{\vm^2(1-P(\rho))^2}{(c^A-2)(c^A-1)}-\frac{\vm^2(1-P(\rho))^{c^A}}{(c^A-2)(c^A-1)}-\frac{\vm^2 P(\rho) (1-P(\rho))^{c^A}}{c^A-1}.
		\]
		Since $c^A=\frac{2}{\sigma^2 P(\rho)}+2 \to \infty$ for each fixed $\rho$ when $\sigma^2 \to 0$, we have that
		\[
			\frac{\vm^2(1-P(\rho))^2}{(c^A-2)(c^A-1)} \xrightarrow{\sigma^2 \to 0} 0,
		\]
		while
		\[
			\frac{\vm^2(1-P(\rho))^{c^A}}{(c^A-2)(c^A-1)} \xrightarrow{\sigma^2 \to 0} 0, \quad \frac{\vm^2 P(\rho) (1-P(\rho))^{c^A}}{c^A-1} \xrightarrow{\sigma^2 \to 0} 0
		\]
		because $0 \leq 1-P(\rho) \leq 1$. Therefore, $R_A(\vm P(\rho)) \to 0$ in the limit $\sigma^2 \to 0$. The same considerations can be applied for the braking part. In fact, substituting $u=\vm P(\rho)$ in $R_B(u)$ we find
		\begin{align*}
			R_B(\vm P(\rho)) =& \frac{\vm^2 P(\rho)^2 (1-P(\rho))^2}{(c^B-1)(c^B-2)} - \frac{\vm^2 P(\rho)^{c^B} (1-P(\rho))^2 (1+P(\rho))^{2-c^B}}{(c^B-1)(c^B-2)} \\
			& - \frac{\vm^2 P(\rho)^{c^B} (1-P(\rho))^2 (1+P(\rho))^{1-c^B}}{c^B-1}.
		\end{align*}
		Recall that $c^B=\frac{2}{\sigma^2}+2$ and thus $c^B \to \infty$ when $\sigma^2 \to 0$. While, since $1 \leq 1+P(\rho) \leq 2$ and $1-c^B < 0$, one verifies that $1+P(\rho)^{2-c^B} \to 0$ and $1+P(\rho)^{1-c^B} \to 0$. These facts guarantee that $R_B(\vm P(\rho)) \to 0$, in the limit $\sigma^2 \to 0$ and the thesis is proved.
	\end{proof}
	
	\begin{figure}[t!]
		%\centering
		\includegraphics[width=0.52\textwidth]{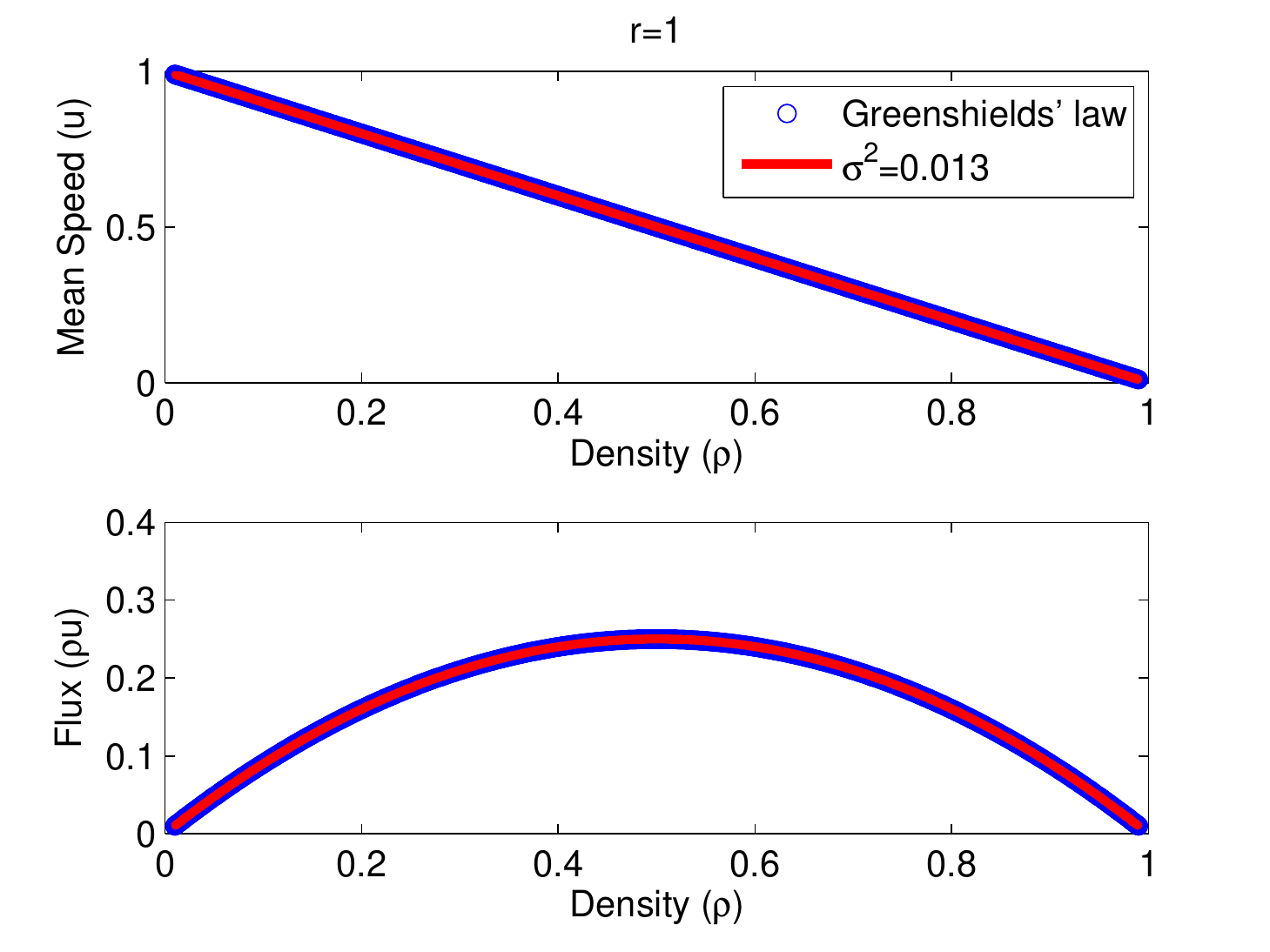}
		%\hfill
		\includegraphics[width=0.52\textwidth]{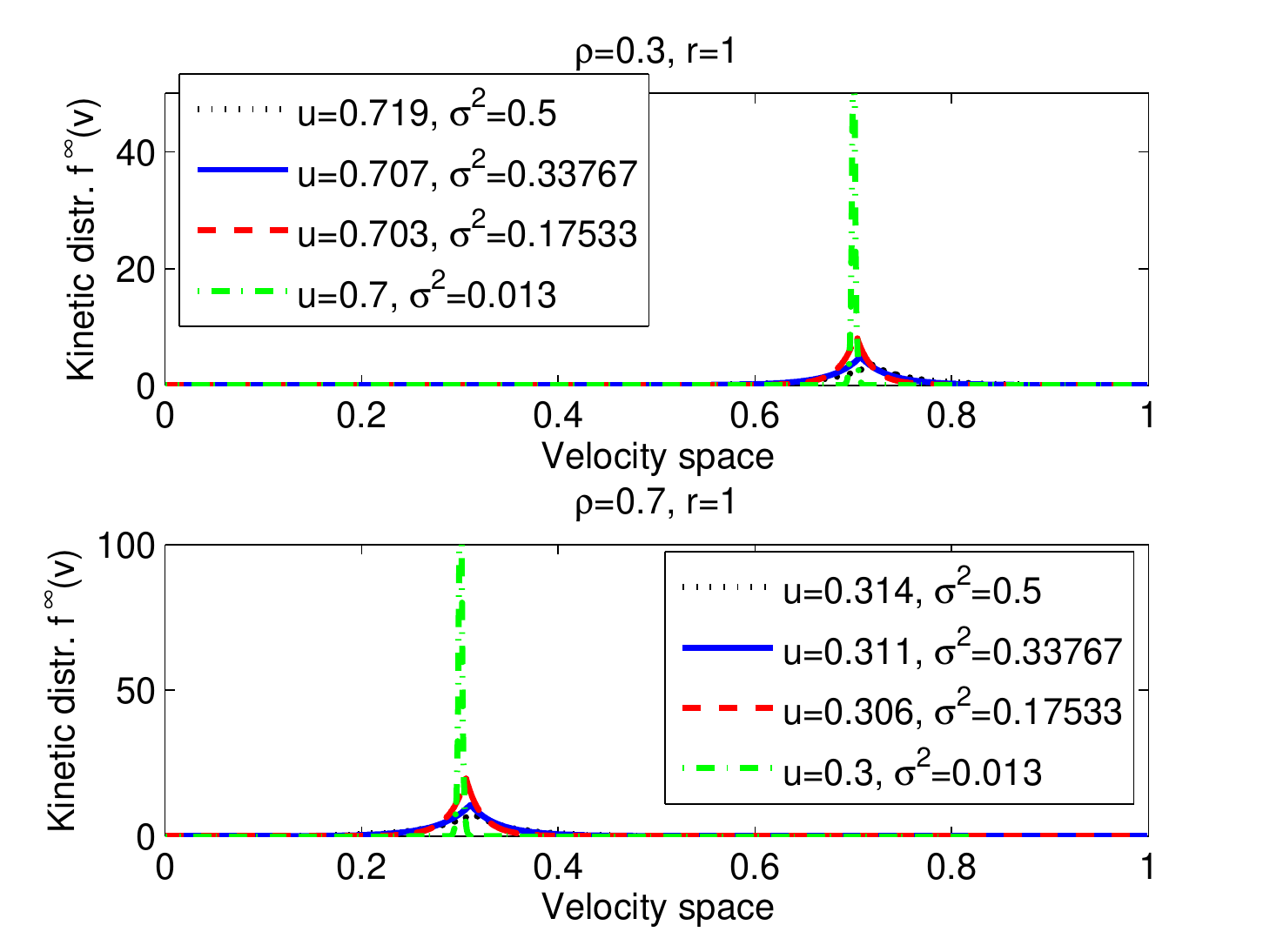}
		\caption{Left panels: diagrams of traffic obtained with the Greenshields' closure law (blue data) and the kinetic model with desired speeds $V_A=v+P(\vm-v)$ and $V_B=Pu$, fixing $r=1$ and $\sigma^2=0.013$. Right panels: the equilibrium distributions for decreasing values of the variance $\sigma^2$ and $\rho=0.3$ (top), $\rho=0.7$ (bottom).\label{fig:Case1vsGreenshields}}
	\end{figure}
	
	The previous Proposition ensures that an empirical closure law $u(\rho)$ between the mean speed and the density (see e.g. \cite[Chap.\,10.1]{Rosini}) can be derived from a kinetic approach based on a microscopic model in which the rules are given by \eqref{eq:MicroRules}, with suitable choices of the probability of accelerating $P$. In other words, in this framework the probability $P$, which influences the microscopic behavior of drivers, can explain also the macroscopic trend of the flow when $\sigma^2 \to 0$.
	
	This relation is investigated numerically in Figure \ref{fig:Case1vsGreenshields} for the Greenshields' closure  frequently used in macroscopic traffic models, see \cite{greenshields}. Let $\rhom=\vm=1$. In the left panels, we show the diagrams of traffic provided by the Greenshields' law $u(\rho)=1-\rho$ (blu data) and by the Fokker-Planck model discussed in this paragraph (red data) with $\sigma^2=0.013$. The convergence rate of the equilibrium speeds to the relation $u(\rho)=1-\rho$ when $\sigma^2 \to 0$ is given in Table \ref{tab:error} in which the distance is computed as $\norm{(1-\boldsymbol{\rho})-\mathbf{u^{\mathsf{eq}}}}_2$, where $\boldsymbol{\rho}$ is a vector of densities and $\mathbf{u^{\mathsf{eq}}}$ is a vector of the equilibrium values found by solving $R(u)=0$ for each density in $\boldsymbol{\rho}$. One can observe that the error goes to zero as $\sigma^2 \to 0$ with sub-linear rate.
	\begin{table}[!t]
		\begin{center}
			\begin{tabular}{l|c|c}
				$\sigma^2$ & Distance (2-norm) & Rate \\
				\hline
				\hline
				$0.5$ & $4.4872E-01$ & - \\                          
				$0.25$ & $1.8192E-01$ & $1.0672$ \\
				$0.125$ & $9.2778E-02$ & $0.7131$ \\
				$0.0625$ & $4.7283E-02$ & $0.7279$\\
				$0.03125$ & $2.3873E-02$ & $0.7491$ \\
				$0.015625$ & $1.1995E-02$ & $0.7602$\\
				\hline
			\end{tabular}
		\end{center}
		\caption{Table of the distance $\norm{(1-\boldsymbol{\rho})-\mathbf{u^{\mathsf{eq}}}}_2$ between the Greenshields' equilibrium speeds $1-\boldsymbol{\rho}$ and th equilibrium ones $\mathbf{u^{\mathsf{eq}}}$ of the kinetic model with desired speeds \eqref{eq:DesiredSpeeds1}. \label{tab:error}}
	\end{table}
	In the right panels of Figure \ref{fig:Case1vsGreenshields} we plot the continuous stationary kinetic distributions \eqref{eq:Case1} for %$\kappa=1$,
	$\rho=0.3$ (top plot), $\rho=0.7$ (bottom plot) and several decreasing values of the variance $\sigma^2$. As $\sigma^2 \to 0$, the equilibrium distributions approach a Dirac delta centered in $u(\rho)=1-\rho$, namely all vehicles travel with same speed at equilibrium. This fact is not surprising because when $\sigma^2 \to 0$ the microscopic rules \eqref{eq:MicroRules} become purely deterministic and the diffusion operator in the Fokker-Planck equation \eqref{eq:GeneralFP} vanishes. % and therefore we are in the case of synchronized flow.%Again, the previous consideration permits to establish a connection from the closure of the present kinetic model and the Greenshields' law, because this latter is used as closure of macroscopic models in which there is no information on the actual velocity of a given car.
	
\subsection{Case 2}\label{sec:Case2} Here we assume
	\begin{equation} \label{eq:DesiredSpeeds2}
		V_A=\min\{v+\Dv,\vm\}, \quad V_B=Pu.
	\end{equation}
	Thus we do not modify the desired speed in braking with respect the previous case. Instead, in acceleration, we assume that the desired speed is not influenced by the congestion level of the flow. More precisely, we model the situation in which, once a driver decides to accelerate, the desired speed is reached with a fixed jump of velocity, so that $V_A$ is $v+\Dv$ if the resulting speed is smaller than $\vm$, while $V_A=\vm$ when $\vm-v<\Dv$. In this case the amplitude of the jump is less than $\Dv$. Observe that this choice was introduced in \cite{PgSmTaVg2}. The parameter $\Dv$ would be a finite parameter which models the physical velocity jump performed in acceleration. Clearly, it may depend on the mechanical microscopic characteristics of vehicles, see \cite{PgSmTaVg3}, but here we will assume that it is constant. With this choice and evaluating explicitly the integrals in \eqref{eq:SteadySolution}, the asymptotic distribution $f^\infty$ writes as
	\begin{subequations}
		\label{eq:Case2}
		\begin{align}
			f^\infty(v) &= f^\infty(u^-)
			\begin{cases}
				\exp\left( \frac{c-2}{\Dv}(v-u) \right), & v<u<\vm-\Dv \\[2ex]
				\left( \frac{\vm-u}{\Dv} \right)^{c} \exp\left( \frac{c-2}{\Dv} (v+\Dv-\vm) \right), & v<\vm-\Dv<u \\[2ex]
				\left( \frac{\vm-u}{\vm-v} \right)^{c}, & \vm-\Dv<v<u
			\end{cases} , \quad \text{if \; $v < u$} \label{eq:Case2Acc}\\
			f^\infty(v) &= f^\infty(u^+) \left( \frac{u-Pu}{v-Pu} \right)^{c}, \quad \text{if \; $v > u$} \label{eq:Case2Brak}
		\end{align}
	\end{subequations}
	where now $c:=c^A=c^B=\frac{2}{\sigma^2}+2$. The kinetic distribution $f^\infty$ given in \eqref{eq:Case2Acc} for $v < u$ can be computed by studying separately the cases $\min\{v+\Dv,\vm\}=v+\Dv$ and $\min\{v+\Dv,\vm\}=\vm$.
	
	Recall that the non-linear equation $R(u)=0$ is determined by the constraint \eqref{eq:OneConstraint}. Since the steady distributions \eqref{eq:Case1} and \eqref{eq:Case2} differ only for $v<u$, $R_B(u)$ is already defined by \eqref{eq:Case1RB} and we need to recompute only $R_A(u)$. For the steady solution \eqref{eq:Case2Acc}, $R_A$ is defined as
	\begin{equation*}
		R_A(u) =  f^\infty(u^-)
		\begin{cases}
				\int_0^u (u-v) \exp\left( \frac{c-2}{\Dv}(v-u) \right) \mathrm{d}v, & \text{if $u<\vm-\Dv$}\\[3ex]
				\int_0^{\vm-\Dv} (u-v) \left( \frac{\vm-u}{\Dv} \right)^{c} \exp\left( \frac{c-2}{\Dv} (v+\Dv-\vm) \right) \mathrm{d}v \\[1ex] + \int_{\vm-\Dv}^u (u-v) \left( \frac{\vm-u}{\vm-v} \right)^{c} \mathrm{d}v, & \text{if $u>\vm-\Dv$}
		\end{cases}.
	\end{equation*}
	Computing explicitly the integrals, we find
	\begin{equation}\label{eq:Case2RA}
		R_A(u) =
		\begin{cases}
			\left( \frac{\Dv}{c-2} \right)^2 \left(1-\exp(-\frac{(c-2)}{\Dv}u)\right) - \frac{\Dv}{c-2} u \exp(-\frac{(c-2)}{\Dv}u), & \text{if $u<\vm-\Dv$}\\[3ex]
			\frac{\Dv}{c-2}\left( \frac{\vm-u}{\Dv} \right)^{c} \left[ u+\Dv-\vm - u \exp(-\frac{c-2}{\Dv}(\vm-\Dv)) \right] \\[1ex]
			+ \left( \frac{\Dv}{c-2} \right)^2 \left( \frac{\vm-u}{\Dv} \right)^{c} \left[ 1-\exp(-\frac{c-2}{\Dv}(\vm-\Dv)) \right] & \text{if $u>\vm-\Dv$} \\[1ex]
			+ \frac{(\vm-u)^{c}}{c-1} \left[ (\vm-u-\Dv) \Dv^{1-c} + \frac{(\vm-u)^{2-c}}{c-2} - \frac{\Dv^{2-c}}{c-2} \right].
		\end{cases}
	\end{equation}

	\begin{figure}[t!]
		%\centering
		%\includegraphics[width=0.52\textwidth]{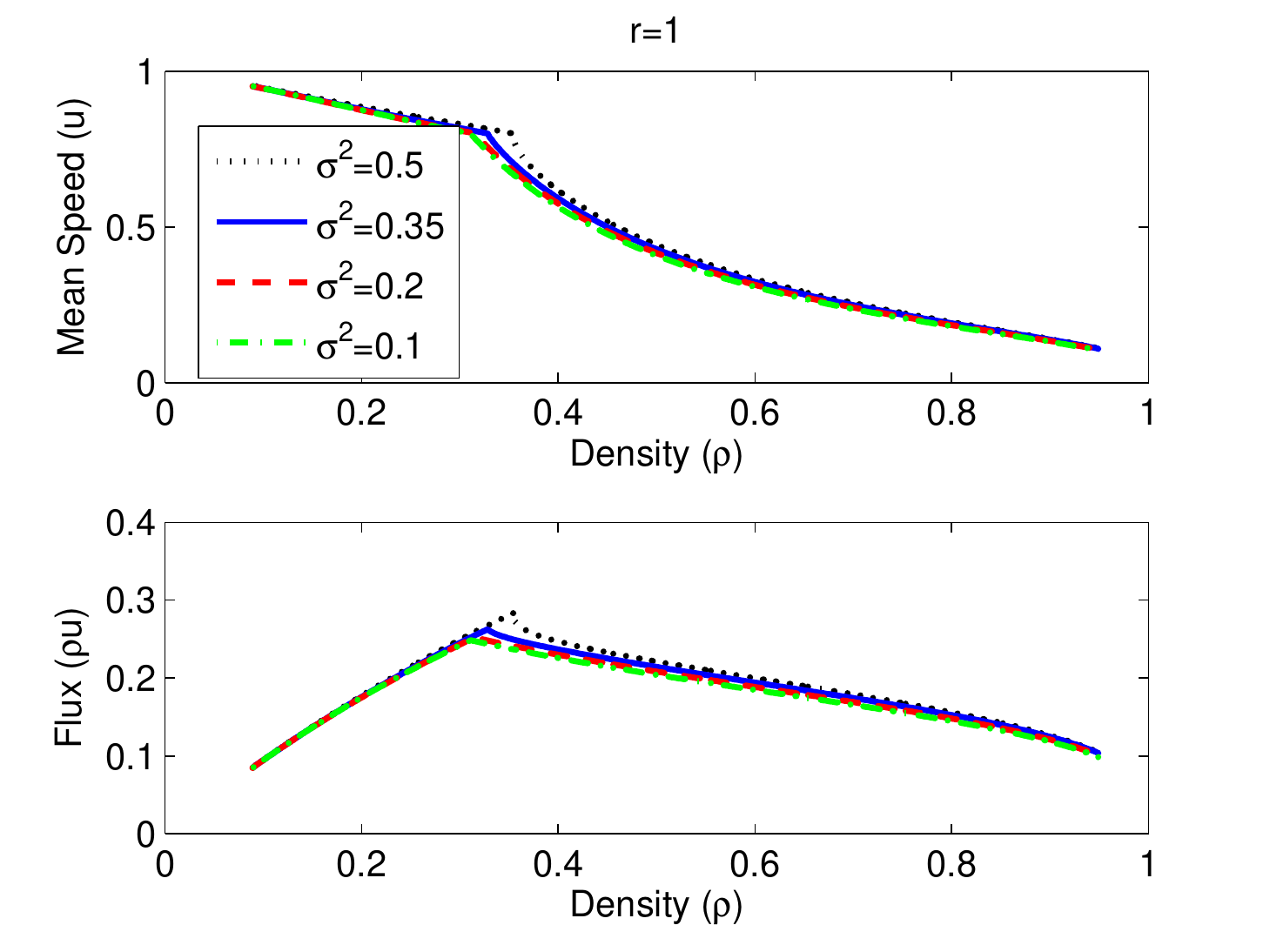}
		\includegraphics[width=0.52\textwidth]{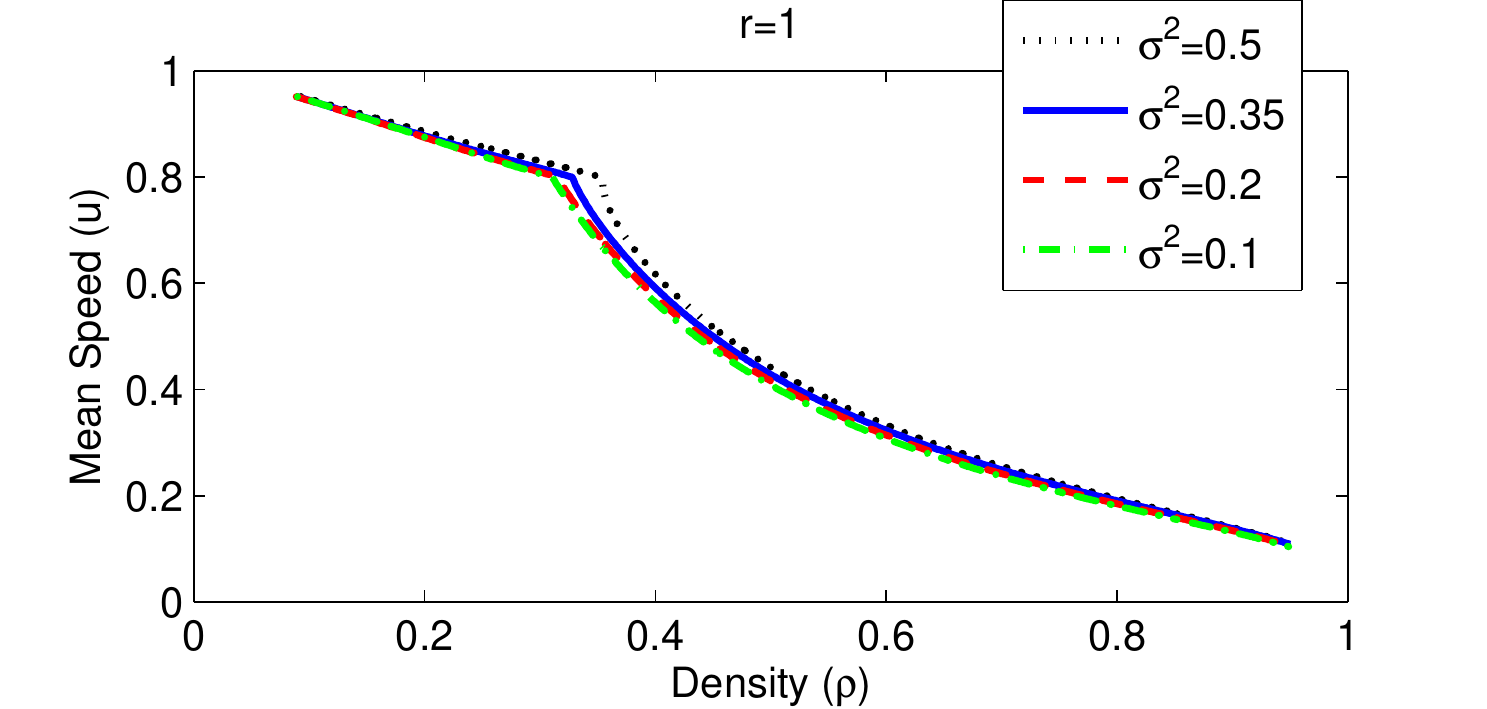}
		\includegraphics[width=0.52\textwidth]{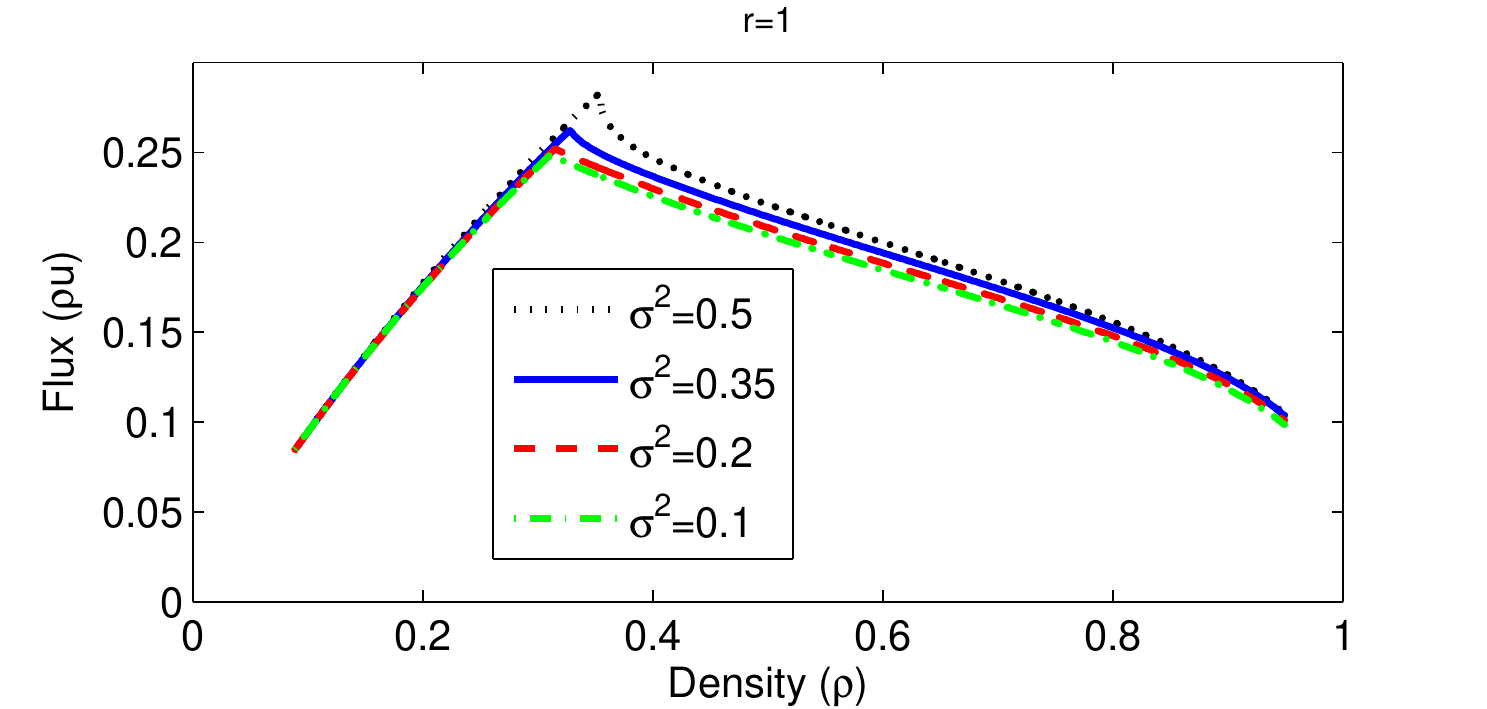}
		\caption{Macroscopic diagrams of traffic using the desired speeds \eqref{eq:DesiredSpeeds2}. We take $r=1$, $\Dv=0.2$, for several values of the variance $\sigma^2$.\label{fig:Case2FD-sigma}}
	\end{figure}
	
	\begin{figure}[t!]
		%\centering
		\includegraphics[width=0.52\textwidth]{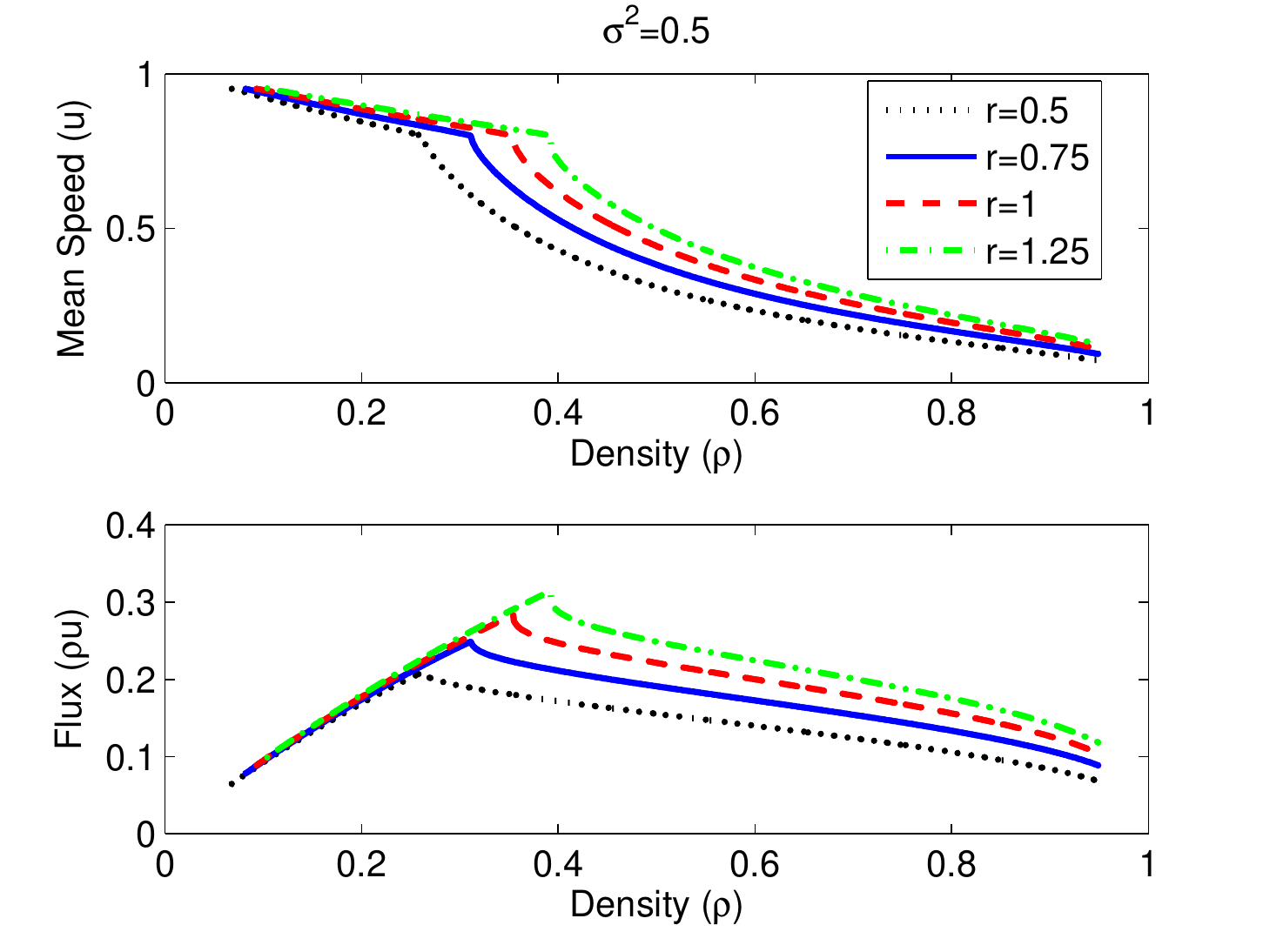}
		%\hfill
		\includegraphics[width=0.52\textwidth]{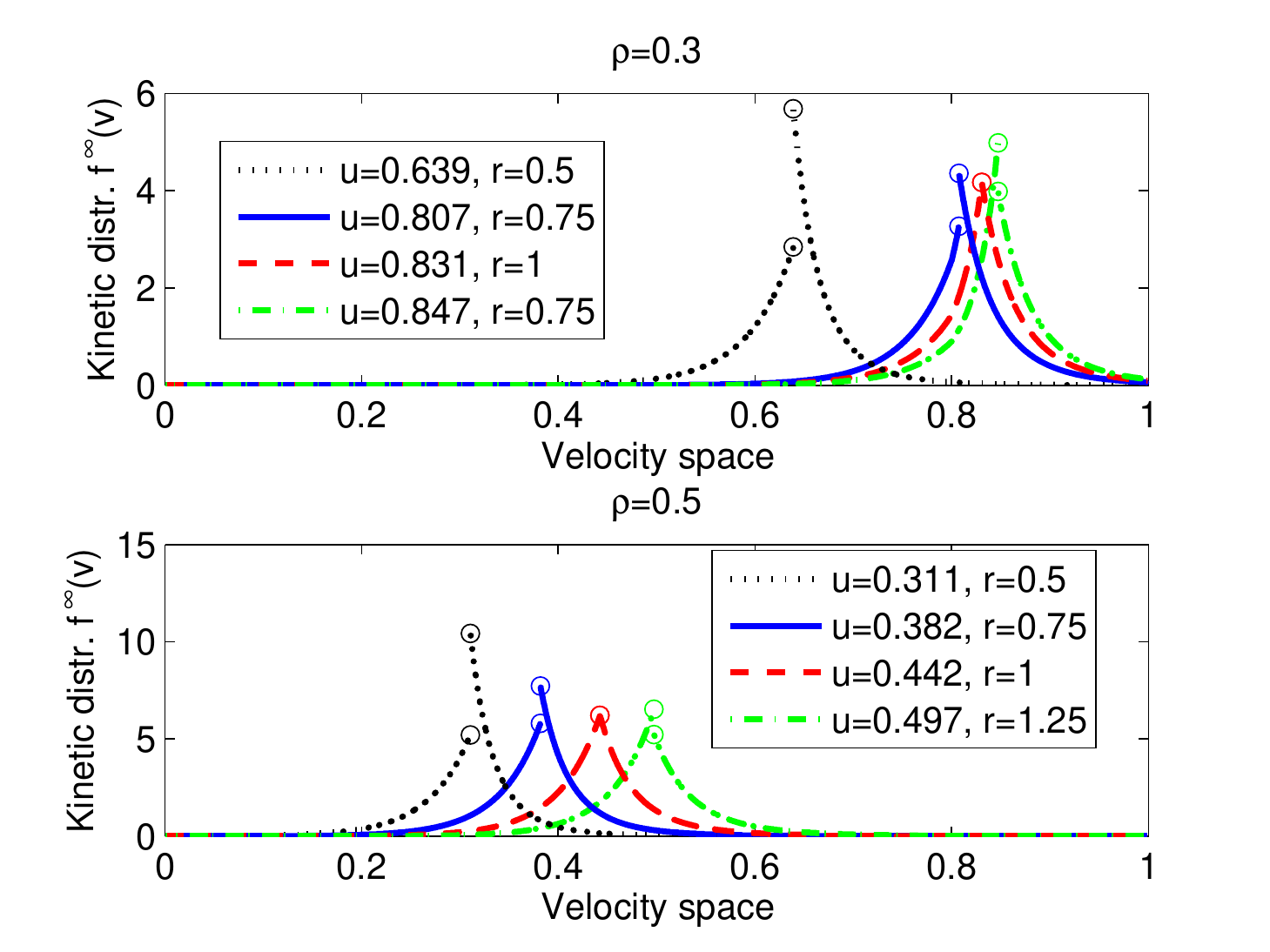}
		\caption{Left panels: macroscopic diagrams of traffic using the desired speeds \eqref{eq:DesiredSpeeds2}. We take $\sigma^2=0.5$, $\Dv=0.2$ and we study the dependence on the parameter $r$. Right panels: equilibrium distributions for $\rho=0.3$ (top) and $\rho=0.5$ (bottom). The circles define the values $f(v)$ for $v\to u^-$ and $v\to u^+$.\label{fig:Case2FD-r}}
	\end{figure}
	
	In Figure~\ref{fig:Case2FD-sigma} we study the influence of %$\kappa$ (left panels) and $\sigma^2$ (right panels)
	$\sigma^2$ on the mean speed and on the flux for $r=1$, that is allowing only continuous steady states. As already discussed, the equilibrium values are obtained by solving numerically $R(u)=0$. The top plots show the speed-density diagrams, while at the bottom we have the flux-density diagrams. In all simulations we set $\rhom = \vm = 1$. The probability of accelerating $P$ is taken as prescribed in \eqref{eq:Plaw}. The parameter $\Dv$ is chosen as $0.2$. %The parameter $\kappa$ seems to influence strongly the absolute value of the slope of the flux curves at high densities. In fact, it becomes smaller, and thus we have a smoother evolution towards the jammed traffic flow, as $\kappa$ decreases. Instead, as
	As in Case 1, the variance $\sigma^2$ seems to produce only small variations in the equilibrium values of the macroscopic speed and of the flux. Only the values near to the phase transition are affected, in particular they become sharper when $\sigma^2$ increases. Again, a multivalued behavior cannot be obtained by acting on %$\kappa$ and
	$\sigma^2$.
	
	Next we consider the diagrams obtained with %$\kappa=1.5$ and
	$\sigma^2=0.5$ but depending on the parameter $r$ which allows one to take into account different steady states according to their regularity at $v=u$, see equation \eqref{eq:r}. The right panels of Figure \ref{fig:Case2FD-r} show the stationary distributions for $\rho=0.3,0.5$. In the left panels of Figure \ref{fig:Case2FD-r}, instead, we show that now we obtain a more significant and accentuated dispersion of the equilibrium values in a larger range of densities. The dispersion of data becomes small at low and high densities and this is in accordance with experimental data.
	
	We stress the fact that the kinetic theory allows us to explain both the theoretical and the experimental fundamental diagrams in terms of microscopic rules. In fact, on the one hand, with the the desired speeds \eqref{eq:DesiredSpeeds1} we reproduce the heuristic closure laws used in the macroscopic modeling, as for instance the Greenshields' law, see Proposition \ref{th:Greenshields}. On the other hand, the desired speeds \eqref{eq:DesiredSpeeds2} allow us to reproduce well experimental data.
	
	Let us focus first on the free-flow phase. This regime occurs at low densities, i.e when the road is not congested, thus there is a large distance among the vehicles, the interactions are rare and the velocity of the flow can be high. In Figure \ref{fig:Case2FD-r} we observe that the flux increases nearly linearly with respect to the density of vehicles and the dispersion of the data is small. Nevertheless, the average velocity of the vehicles does not coincide with the maximum allowed speed. In fact, looking at the speed-density diagrams, we note that the macroscopic speed decreases as the density increases. This means that the model does not provide stationary solutions being Dirac delta centered in the maximum speed $\vm$ when the density is less than a fixed value, see e.g. the quantized steady state distributions analytically computed in \cite{PgSmTaVg2}. This result seems to be coherent with the experience because one tends to travel at the maximum speed only in really free road conditions while the velocity is reduced as the number of vehicles on the road increases. However, observe that, as we expect, the decrease of the mean speed is not so fast to cause a decrease of the average flux.
	
	Now, let us focus on the congested phase of traffic. In this regime the road becomes progressively jammed, thus the interactions among the vehicles are more frequent and drivers tend to control their velocity with respect the speed of the flow  which is forced to decrease due to the large number of vehicles. As a consequence, in Figure \ref{fig:Case2FD-r} we observe that the flux decreases as the density of the vehicles increases because of a drastic reduction of the mean speed. Moreover, there is a range of densities in which it is evident that we can obtain scattered values in the flux-density plane. However, as the road becomes jammed, the possible scattering reduces because the distance between the equilibrium curves reduces, see for instance the speed-density diagram in Figure \ref{fig:Case2FD-r}. This is consistent with the daily experience of driving on highways, because in congested flow all vehicles tend to travel at the same speed.
	
	Finally, notice that, for increasing values of $r$, the diagrams provided by the present kinetic model reproduce naturally the phenomenon of the capacity drop, that is the sharp decrease of the flux across a density value. In fact, it is possible to identify the so-called critical density in which there is a phase transition between the free and the congested regimes.
	
	\subsection{The role of $\boldsymbol{r}$ and comparisons with data} \label{sec:ExpData}

	Since the aim of this paper is to investigate the dependence between the structure of fundamental diagrams and the microscopic dynamics, in this section we propose comparisons with experimental data. To this end, we also analyze the role of the free parameter $r$, which parametrizes the family of equilibrium distributions $f^\infty$, and we show that it is the key to reproduce multivalued diagrams.
	
	We recall from equation \eqref{eq:r} that $r$ is not easily recognized as a physical parameter, because it gives information on the regularity of $f^\infty$ at $v=u$. In fact, $r$ is the ratio between the two integration constants arising from the solution of the ODE \eqref{eq:ODE}. Therefore, we need to give a recipe in order to choose the parameter and to understand whether we can associate different values of $r$ to the same flow.
	
	To this end, in the following, although it seems that the scattering of data is due only to a mathematical fact, we prove that $r$ can actually be linked to observable/measurable macroscopic quantities of the flow. Start from the quantity $R(u)$ given in \eqref{eq:R}. We observe that it is a function also of the density $\rho$ and of the parameter $r$ when they are not fixed a priori. In fact, as discussed at the beginning of Section \ref{sec:FundamentalDiagrams}, applying the constraint \eqref{eq:ZeroConstraint} and assuming that $f^\infty$ is parametrized by the ratio \eqref{eq:r}, we find the expressions \eqref{eq:IntConst} of the two integration constants. Then, the quantity $R$ given in \eqref{eq:R} can be written as
	\[
		R(u,\rho,r)= \frac{r\rho}{r\rho^-+\rho^+} R_A(u,\rho) - \frac{\rho}{r\rho^-+\rho^+} R_B(u,\rho) 
	\]
	where $R_A$ and $R_B$ are defined by \eqref{eq:Case1RA}-\eqref{eq:Case1RB} and \eqref{eq:Case2RA}-\eqref{eq:Case1RB} when the desired speeds are chosen as \eqref{eq:DesiredSpeeds1} and \eqref{eq:DesiredSpeeds2}, respectively. By Implicit Function Theorem we have that $R(u,\rho,r)=0$ defines implicitly a function
	\begin{equation} \label{eq:rVSmacro}
	r=r(u,\rho)=\frac{R_B(u,\rho)}{R_A(u,\rho)}
	\end{equation}
	in each neighborhood of $(u_0,\rho_0,r_0)$ such that $R(u_0,\rho_0,r_0)=0$ provided that  $u_0 \neq 0$, $u_0 \neq \vm$, $\rho_0 \neq 0$ because in this case one has $\partial_r R(u,\rho,r) \neq 0$. The function \eqref{eq:rVSmacro} thus establishes a link between the free parameter $r$ and the macroscopic quantities of traffic.
	
	Experimental data show several values of the mean speed $u$ at equilibrium for the same value of the density $\rho$, correspondingly the relation \eqref{eq:rVSmacro} states that, for any fixed $\rho$, there exist more values of $r$ related to the same flow. Therefore, the analytical origin of the multivalued region is explained here by the presence of a family of equilibrium solutions parametrized by $r$. The parameter $r$ is the degree of freedom (in addition to $\rho$) being necessary to equilibrium in order to describe the dispersion of data. In other words, we can think that $r$ synthesizes the different properties of the flow (as the composition of the traffic flow or the structural characteristics of the road) which induce different macroscopic dynamics for a given value of the density $\rho$. For instance, comparing this result with our multi-population models \cite{PgSmTaVg3,PgSmTaVg}, we can suppose that $r$ takes into account the heterogeneous composition of the vehicles on the road.
	
	In Figure \ref{fig:Comparisons} we compare the diagrams obtained using the desired speeds \eqref{eq:DesiredSpeeds2} and the experimental measurements provided on a USA highway by the Minnesota Department of Transportation in 2003 (left panel), kindly granted by Seibold et al. \cite{seibold2013NHM}, and in Viale del Muro Torto in Rome (Italy) (right panel), see the review \cite{piccoli2009ENCYCLOPEDIA}. We focus only on the modeling \eqref{eq:DesiredSpeeds2} of the desired speeds because the structure of the diagrams resulting from this choice is totally in agreement with experimental data. In order to choose $r$ using the data we first individuate a density $\rho$ and we select from the experimental diagram all values of the flux $q$ at equilibrium corresponding to the fixed $\rho$. Then we compute the mean speed $u=q/\rho$, so that we can evaluate the relation \eqref{eq:rVSmacro} for each pair $(u,\rho)$ in order to determine $r$. Finally, we plot the corresponding \textquotedblleft theoretical\textquotedblright \ diagrams.
	
	\begin{figure}[t!]
		%\centering
		\includegraphics[width=0.44\textwidth]{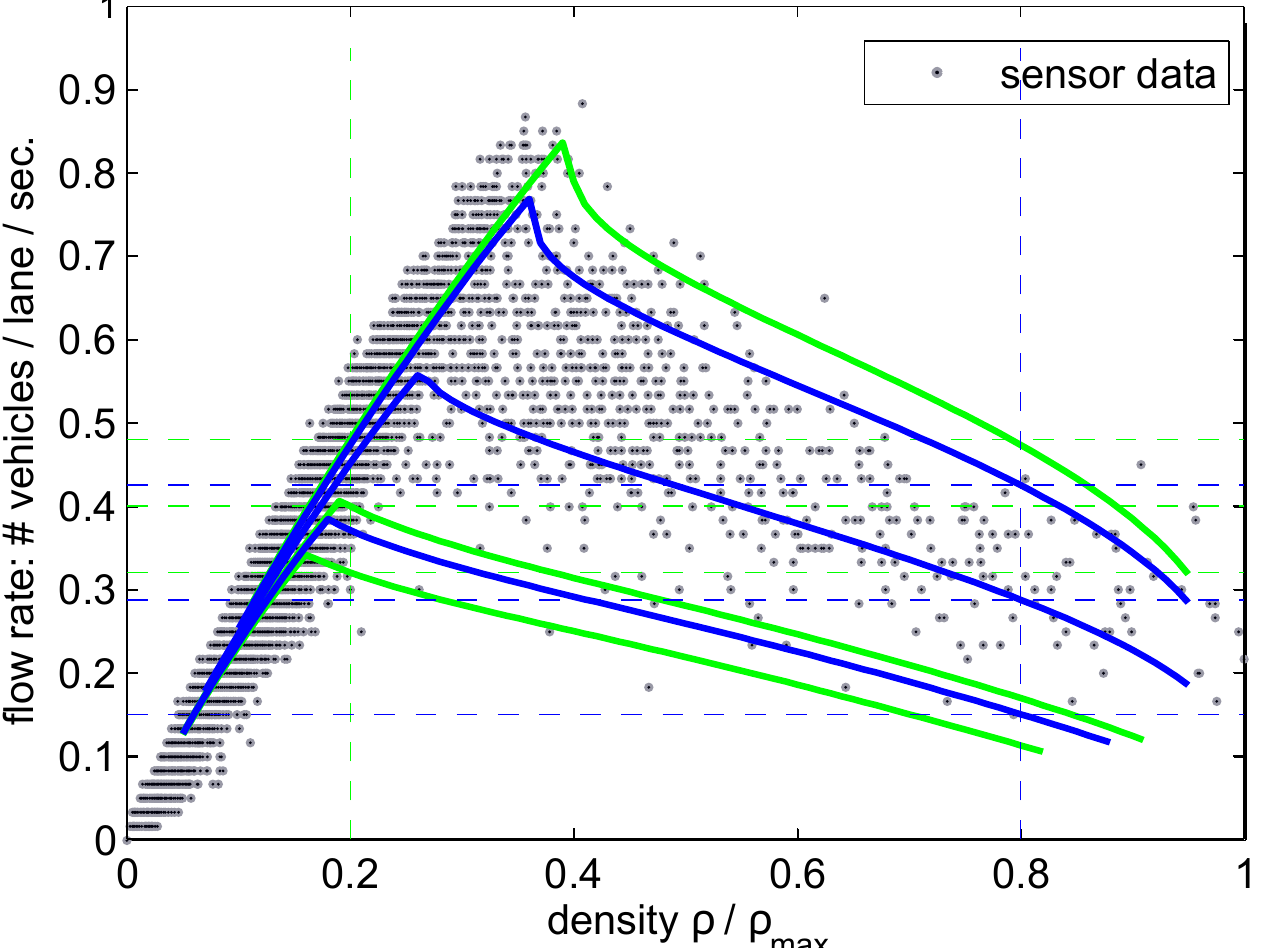}
		\hfill
		\includegraphics[width=0.56\textwidth]{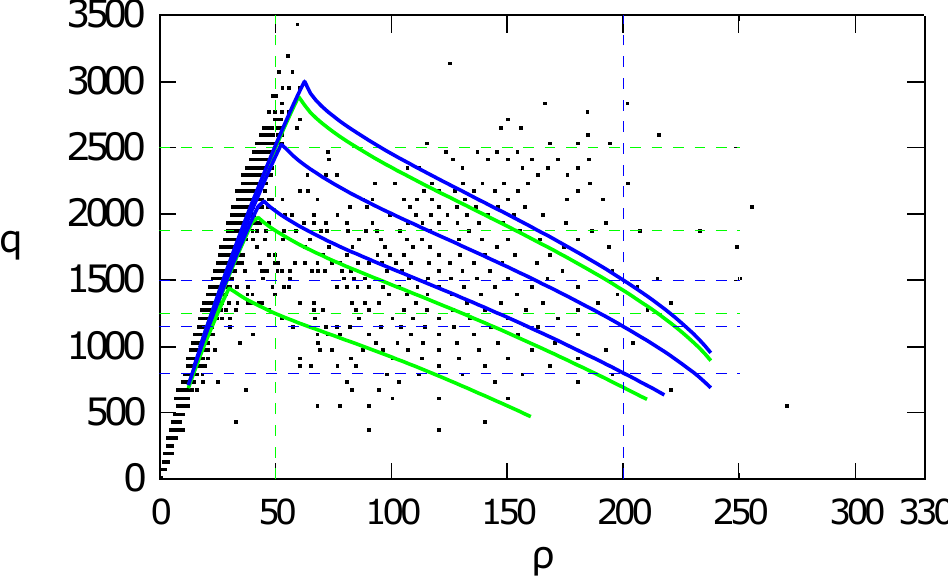}
		\caption{Comparison between the diagrams provided by the present kinetic model with desired speeds \eqref{eq:DesiredSpeeds2} and the experimental diagrams provided by kind permission of Seibold et al. \cite{seibold2013NHM} (left panel) and experimental data collected in one week in Viale del Muro Torto, Rome (Italy),
		from \cite{piccoli2009ENCYCLOPEDIA} (right panel). The parameters of the model are set as $\sigma^2=0.5$ and $\Dv=0.2$. The green curves are related to the density $\rho=0.2$, while the blue curves are related to the density $\rho=0.8$. The free parameter $r$ is computed using the relation \eqref{eq:rVSmacro}.\label{fig:Comparisons}}
	\end{figure}
	
	More precisely, in both panels of Figure \ref{fig:Comparisons} we fix two values of the normalized density $\rho=0.2, 0.8$, taking as maximum density in the right panel $\rhom=250$ vehicles per kilometer. For each of the two densities we consider three flux values, the maximum $q_1=\max(q(\rho))$, the minimum $q_2=\min(q(\rho))$ and finally the mean value $q_3=(q_1+q_2)/2$. The intersections of the green dashed lines individuate the pairs of data $(\rho=0.2,q_i)$, $i=1,2,3$, while the intersections of the blue dashed lines individuate the pairs of data $(\rho=0.8,q_i)$, $i=1,2,3$. For each pair of data the normalized mean speed is computed as $u = q/(\rho u_{\max})$ where $u_{\max}$ is the maximum speed and is approximated with the linear interpolation of the free flow regime. Finally, the pairs $(u,\rho)$ define three values of the parameter $r$ for each fixed density by means of equation \eqref{eq:rVSmacro}. The green diagrams are related to $\rho=0.2$, while the blue diagrams are related to $\rho=0.8$.
	We observe that there is a remarkable correspondence between the \textquotedblleft theoretical\textquotedblright \ and the experimental diagrams. In fact, the equilibrium curves provided by the present kinetic model reproduce on the whole the structure of the experimental data, included the scattering, both in the free and the congested regime. This means that, in addition to explaining the dispersion with the variability of $r$, the microscopic model is already endowed with the other physical characteristics of traffic, included the phase transition. We stress that we do not interpolate the data. We have only one degree of freedom and the remaining features are explained by the model itself.
	
	\begin{figure}[t!]
		\centering
		\includegraphics[width=0.49\textwidth]{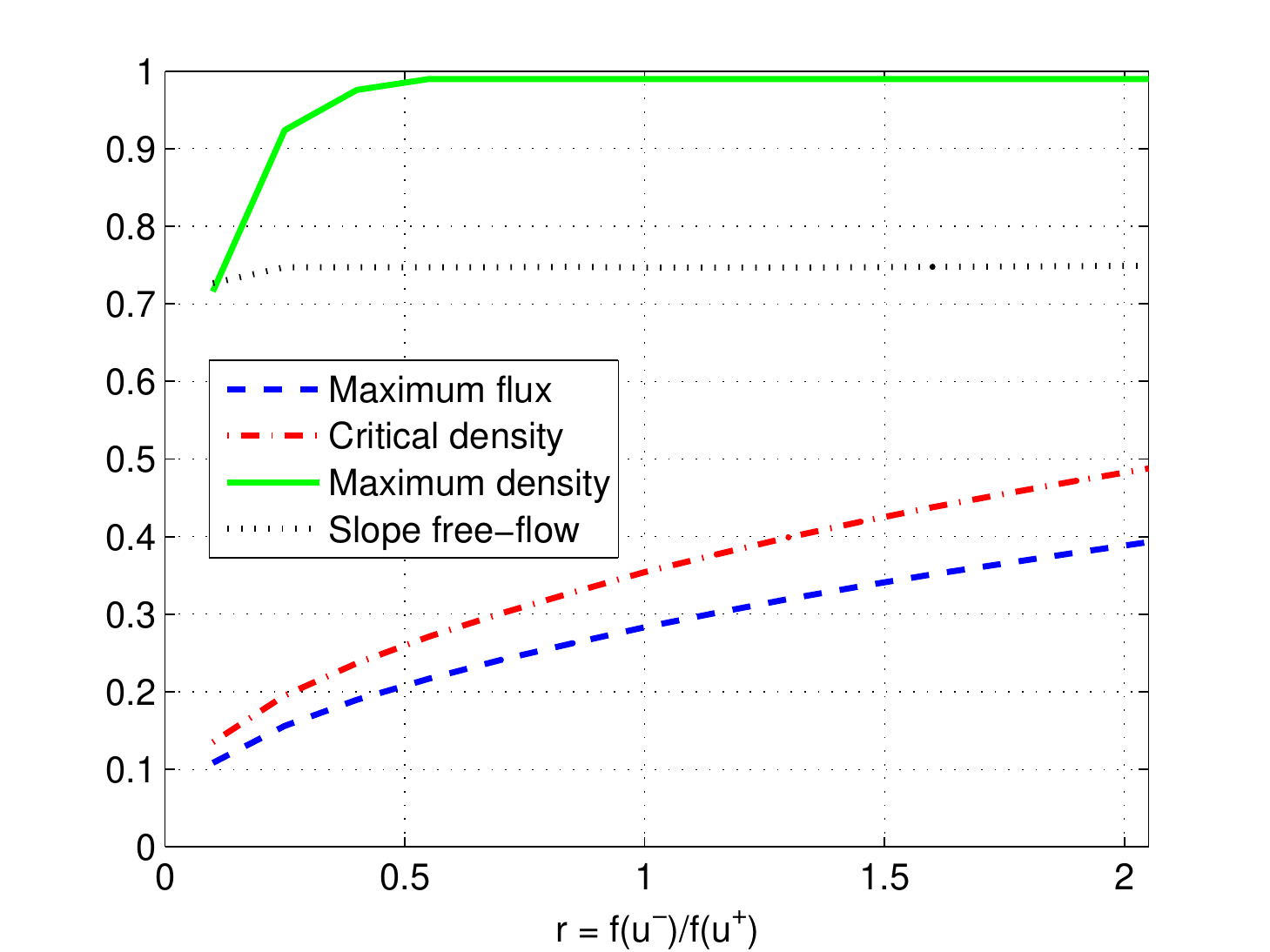}
		\caption{Monotone relation between the ratio $r=\frac{f^\infty(u^-)}{f^\infty(u^+)}$ and the maximum flux (blue dashed line), the critical density (red dot-dashed line), the maximum density (green solid line), the slope in the free-flow regime (black dotted line).\label{fig:rVSmacro}}
	\end{figure}
	
	Finally, in Figure \ref{fig:rVSmacro} we show the relation holding between the parameter $r$ and some macroscopic quantities: precisely, the maximum value of the flux (blue dashed line), the value of the critical density (red dash-dotted line), the maximum value of the density (green solid line) and the slope of the free flow branch (black dotted line). We observe an increasing monotone dependence which allows us to state that only a limited range of values for $r$ can be admissible. In fact, for example, for larger $r$ the critical density becomes too high when compared to that provided by experimental data.

\section{Conclusions and Outlook} \label{sec:Conclusions}
%\todo[inline]{Future perspective: derive a multipopulation model using the interaction rules given in the present framework. In this way, since vehicles interact with respect the mean speed $u$, the equations of the model are coupled only on $s=\sum_i \rho_i l_i$ and thus we still have only one collision operator. Reduced complexity of the model; multivalued diagrams without assuming discontinuous steady states; richer time-asymptotic solutions which are not quantized.}

In this paper, starting from interaction rules %based on two levels of stochasticity and
derived from those introduced in \cite{HertyPareschi10} and in \cite{PgSmTaVg2}, we have computed a Fokker-Planck model for traffic flow as grazing collision limit of a Boltzmann-type model. In particular, we have aimed at recovering multivalued fundamental diagrams and at analyzing the influence of the microscopic interactions on the collective dynamics of the flow. In fact, using the Fokker-Planck approximation, we have computed the asymptotic time-distribution of model for the general set of the rules which are defined once the desired speeds are chosen. In particular, we have considered desired speeds preserving the synchronized traffic states, or reproducing for example the Greenshields' closure law \cite{greenshields}, or leading to fundamental diagrams which have the same qualitative properties of the experimental data.

The mathematical key allowing for the scattering of the flux values is the existence of a one-parameter family of stationary distributions. We have shown that the positive real parameter which defines the steady states has a strong link with the macroscopic properties of the vehicular flow. This means that it synthesizes all the properties of the flow which determine a change in the flux for the same density of vehicles. Moreover, conversely to \cite{PgSmTaVg2} in which we aimed at establishing connections between continuous- and discrete-velocity kinetic models using microscopic binary interactions, here the model does not provide quantized asymptotic distributions and thus we obtain a continuous description of the microscopic velocity distribution.

Since the Fokker-Planck approximation is less demanding when compared to the full Boltzmann-type model, we think that the natural sequel of this work could be the study of a multi-population model based on the interaction rules prescribed in the present paper. In fact, in the approximation that vehicles are affected only by the average properties of the flow, the binary interaction terms of the Boltzmann equations are replaced  by simpler mean-field terms for all populations of vehicles. Thus, each equation contains only one collision operator and furthermore equations are coupled only through the macroscopic density $\rho$ and the mean speed $u$, a fact which permits to keep the complexity of the model low. On the other hand, we expect to obtain multivalued diagrams as a result of the heterogeneous composition of the flow without allowing for discontinuous stationary distributions because we explicate one of the properties included in the mathematical free parameter of this work.

\paragraph{Acknowledgments} This work was partly supported by \textquotedblleft National Group for Scientific Computation (GNCS-INDAM)\textquotedblright.\\
This work has been supported by KI-Net NSF RNMS grant No. 1107444, grants DFG Cluster of Excellence \textquotedblleft Production technologies for high-wage countries\textquotedblright, HE5386/13,14-1 and 15/1 
as well as the VIGONI--MUIR project.\\
Andrea Tosin acknowledges that this work has been written within the activities of GNFM (Gruppo Nazionale per la Fisica Matematica) of INdAM (Istituto Nazionale di Alta Matematica), Italy.\\
Giuseppe Visconti wishes to thank the RWTH Aachen University (IGPM Institute - LuF Mathematik) for their hospitality and financial support.

\bibliographystyle{plain}
\bibliography{references}

\begin{thebibliography}{10}

\bibitem{GreenshieldsSymposium}
{\em 75 {Y}ears of the {F}undamental {D}iagram for {T}raffic {F}low {T}heory:
  {G}reenshields {S}ymposium}, 2011.
\newblock Transportation Research Board, Circular E-C149.

\bibitem{AlbiPareschi}
G.~Albi and L.~Pareschi.
\newblock Binary interaction algorithms for the simulation of flocking and
  swarming dynamics.
\newblock {\em Multiscale Model. Simul.}, 11(1):1--29, 2013.

\bibitem{aw2002SIAP}
A.~Aw, A.~Klar, T.~Materne, and M.~Rascle.
\newblock Derivation of continuum traffic flow models from microscopic
  follow-the-leader models.
\newblock {\em SIAM J. Appl. Math.}, 63(1):259--278, 2002.

\bibitem{aw2000SIAP}
A.~Aw and M.~Rascle.
\newblock Resurrection of ``second order'' models of traffic flow.
\newblock {\em SIAM J. Appl. Math.}, 60(3):916--938 (electronic), 2000.

\bibitem{benzoni2003EJAM}
S.~Benzoni-Gavage and R.~M. Colombo.
\newblock An $n$-populations model for traffic flow.
\newblock {\em Eur. J. Appl. Math.}, 14(05):587--612, 2003.

\bibitem{Nanbu}
A.~V. Bobylev and K.~Nanbu.
\newblock Theory of collision algorithms for gases and plasmas based on the
  {B}oltzmann equation and the {L}andau-{F}okker-{P}lanck equation.
\newblock {\em Phys. Rev. E}, 61(4):4576--86, 2000.

\bibitem{Desvillettes}
L.~Desvillettes.
\newblock On asymptotics of the {B}oltzmann equation when the collisions become
  grazing.
\newblock {\em Transport Theor. Stat.}, 21(3):259--276, 1992.

\bibitem{DiPernaLions}
R.~J. DiPerna and P.~L. Lions.
\newblock On the {F}okker-{P}lanck-{B}oltzmann equation.
\newblock {\em Commun. Math. Phys.}, 120(1):1--23, 1988.

\bibitem{FermoTosin14}
L.~Fermo and A.~Tosin.
\newblock Fundamental diagrams for kinetic equations of traffic flow.
\newblock {\em Discrete Contin. Dyn. Syst. Ser. S}, 7(3):449--462, 2014.

\bibitem{fermotosin2015}
L.~Fermo and A.~Tosin.
\newblock A fully-discrete-state kinetic theory approach to traffic flow on
  road networks.
\newblock {\em Math. Models Methods Appl. Sci.}, 3(25):423--461, 2015.

\bibitem{greenshields}
B.~D. Grienshields.
\newblock A study of traffic capacity.
\newblock {\em Proceedings of the Highway Research Record}, 14:448--477, 1935.

\bibitem{HertyIllner08}
M.~Herty and R.~Illner.
\newblock On stop-and-go waves in dense traffic.
\newblock {\em Kinet. Relat. Models}, 1(3):437--452, 2008.

\bibitem{HertyKlarPareschi}
M.~Herty, A.~Klar, and L.~Pareschi.
\newblock General kinetic models for vehicular traffic flows and
  {M}onte-{C}arlo methods.
\newblock {\em Comput. Methods Appl. Math.}, 5(2):155--169, 2005.

\bibitem{HertyPareschi10}
M.~Herty and L.~Pareschi.
\newblock Fokker-{P}lanck {A}symptotics for {T}raffic {F}low {M}odels.
\newblock {\em Kinet. Relat. Models}, 1(3):165--179, 2010.

\bibitem{Herty2007481}
M.~Herty, L.~Pareschi, and M.~Seaid.
\newblock Enskog-like discrete velocity models for vehicular traffic flow.
\newblock {\em Netw. Heterog. Media}, 2(3):481--496, 2007.

\bibitem{IllnerKlarMaterne}
R.~Illner, A.~Klar, and T.~Materne.
\newblock Vlasov-{F}okker-{P}lanck models for multilane traffic flow.
\newblock {\em Commun. Math. Sci.}, 1(1):1--12, 2003.

\bibitem{kerner2004BOOK}
B.~S. Kerner.
\newblock {\em The Physics of Traffic}.
\newblock Understanding Complex Systems. Springer, Berlin, 2004.

\bibitem{KernerKlenov}
B.~S. Kerner and S.~L. Klenov.
\newblock A microscopic model for phase transitions in traffic flow.
\newblock {\em J. Phys. A}, 35(3):L31--L43, 2002.

\bibitem{KlarWegener96}
A.~Klar and R.~Wegener.
\newblock A kinetic model for vehicular traffic derived from a stochastic
  microscopic model.
\newblock {\em Transport Theor. Stat.}, 25:785--798, 1996.

\bibitem{klar1997Enskog}
A.~Klar and R.~Wegener.
\newblock Enskog-like kinetic models for vehicular traffic.
\newblock {\em J. Stat. Phys.}, 87:91, 1997.

\bibitem{klar1999SIAP-1}
A.~Klar and R.~Wegener.
\newblock A hierarchy of models for multilane vehicular traffic. {I}.
  {M}odeling.
\newblock {\em SIAM J. Appl. Math.}, 59(3):983--1001 (electronic), 1999.

\bibitem{klar1999SIAP-2}
A.~Klar and R.~Wegener.
\newblock A hierarchy of models for multilane vehicular traffic. {II}.
  {N}umerical investigations.
\newblock {\em SIAM J. Appl. Math.}, 59(3):1002--1011 (electronic), 1999.

\bibitem{Lebacque03}
J.~P. Lebacque.
\newblock Two-phase bounded-acceleration traffic flow model: analytical
  solutions and applications.
\newblock {\em Transp. Res. Record}, 1852:220--230, 2003.

\bibitem{lighthill1955PRSL}
M.~J. Lighthill and G.~B. Whitham.
\newblock On kinematic waves. {II}. {A} theory of traffic flow on long crowded
  roads.
\newblock {\em Proc. Roy. Soc. London. Ser. A.}, 229:317--345, 1955.

\bibitem{PareschiToscani2006}
L.~Pareschi and G.~Toscani.
\newblock Self-{S}imilarity and {P}ower-{L}ike {T}ails in {N}onconservative
  {K}inetic {M}odels.
\newblock {\em J. Stat. Phys.}, 124(2-4):747--779, 2006.

\bibitem{PareschiToscaniBOOK}
L.~Pareschi and G.~Toscani.
\newblock {\em Interacting {M}ultiagent {S}ystems. {K}inetic equations and
  {M}onte {C}arlo methods}.
\newblock Oxford University Press, 2013.

\bibitem{PareschiToscaniVillani}
L.~Pareschi, G.~Toscani, and C.~Villani.
\newblock Spectral methods for the non cut-off {B}oltzmann equation and
  numerical grazing collision limit.
\newblock {\em Numer. Math.}, 93(3):527--548, 2003.

\bibitem{paveri1975TR}
S.~L. Paveri-Fontana.
\newblock On {B}oltzmann-like treatments for traffic flow: a critical review of
  the basic model and an alternative proposal for dilute traffic analysis.
\newblock {\em Transportation Res.}, 9(4):225--235, 1975.

\bibitem{piccoli2009ENCYCLOPEDIA}
B.~Piccoli and A.~Tosin.
\newblock Vehicular traffic: {A} review of continuum mathematical models.
\newblock In R.~A. Meyers, editor, {\em Encyclopedia of Complexity and Systems
  Science}, volume~22, pages 9727--9749. Springer, New York, 2009.

\bibitem{Prigogine61}
I.~Prigogine.
\newblock A {B}oltzmann-like approach to the statistical theory of traffic
  flow.
\newblock In R.~Herman, editor, {\em Theory of traffic flow}, pages 158--164,
  Amsterdam, 1961. Elsevier.

\bibitem{PrigogineHerman}
I.~Prigogine and R.~Herman.
\newblock {\em Kinetic theory of vehicular traffic}.
\newblock American Elsevier Publishing Co., New York, 1971.

\bibitem{PgSmTaVg3}
G.~Puppo, M.~Semplice, A.~Tosin, and G.~Visconti.
\newblock Analysis of a multi-population kinetic model for traffic flow.
\newblock {\em Commun. Math. Sci.}
\newblock Accepted. arXiv:1511.06395.

\bibitem{PgSmTaVg2}
G.~Puppo, M.~Semplice, A.~Tosin, and G.~Visconti.
\newblock Kinetic models for traffic flow resulting in a reduced space of
  microscopic velocities.
\newblock 2015.
\newblock Submitted. arXiv:1507.08961.

\bibitem{PgSmTaVg}
G.~Puppo, M.~Semplice, A.~Tosin, and G.~Visconti.
\newblock Fundamental diagrams in traffic flow: the case of heterogeneous
  kinetic models.
\newblock {\em Commun. Math. Sci.}, 14(3):643--669, 2016.

\bibitem{richards1956OR}
P.~I. Richards.
\newblock Shock waves on the highway.
\newblock {\em Operations Res.}, 4:42--51, 1956.

\bibitem{Rosini}
M.~Rosini.
\newblock {\em Macroscopic models for vehicular flows and crowd dynamics:
  theory and applications}.
\newblock Springer, Basel, Switzerland, 2013.

\bibitem{seibold2013NHM}
B.~Seibold, M.~R. Flynn, A.~R. Kasimov, and R.~R. Rosales.
\newblock Constructing set-valued fundamental diagrams from jamiton solutions
  in second order traffic models.
\newblock {\em Netw. Heterog. Media}, 8(3):745--772, 2013.

\bibitem{Toscani2006}
G.~Toscani.
\newblock Kinetic models of opinion formation.
\newblock {\em Commun. Math. Sci.}, 3(4):481--496, 2006.

\bibitem{Villani1999}
C.~Villani.
\newblock Conservative forms of {B}oltzmann's collision operator: {L}andau
  revisited.
\newblock {\em ESAIM Math. Model. Numer. Anal.}, 33(1):209--227, 1999.

\bibitem{ZhangMultiphase}
H.~M. Zhang and T.~Kim.
\newblock A car-following theory for multiphase vehicular traffic flow.
\newblock {\em Transport. Res. B-Meth.}, 39(5):385--399, 2005.

\end{thebibliography}

\end{document}